\documentclass[12pt, article, onecolumn]{IEEEtran}
\IEEEoverridecommandlockouts                              
\overrideIEEEmargins

\usepackage{graphics} 
\usepackage{epsfig} 
\usepackage{subfigure}
\usepackage{times} 
\usepackage{amsmath} 
\usepackage{amssymb}  
\usepackage{mathrsfs}
\usepackage{cite}
\usepackage{epsf}

\usepackage{theorem}
\usepackage[active]{srcltx}
\usepackage{hyperref}
\usepackage{cleveref}

\newtheorem{theorem}{Theorem}
\newtheorem{proposition}{Proposition}

\newtheorem{corollary}{Corollary}
\newtheorem{remark}{Remark}
\newtheorem{definition}{Definition}
\newtheorem{problem}{Problem}

\newcommand{\ud}{\text{d}}
\newcommand{\Def}{\stackrel{\triangle}{=}}

\crefname{lemma}{Lemma}{Lemmas}
\Crefname{lemma}{Lemma}{Lemmas}
\crefname{theorem}{Theorem}{Theorems}
\Crefname{theorem}{Theorem}{Theorems}
\crefname{section}{Section}{Sections}
\Crefname{section}{Section}{Sections}
\crefname{figure}{Fig.}{Figs.}
\Crefname{figure}{Figure}{Figures}
\crefname{table}{Table}{Tables}
\Crefname{table}{Table}{Tables}
\crefname{equation}{Eq.}{Eqs.}
\Crefname{equation}{Equation}{Equations}
\crefname{proposition}{Proposition}{Propositions}
\Crefname{proposition}{Proposition}{Propositions}
\crefname{corollary}{Corollary}{Corollaries}
\Crefname{corollary}{Corollary}{Corollaries}
\crefname{problem}{Problem}{Problems}
\Crefname{problem}{Problem}{Problems}
\crefname{definition}{Definition}{Definitions}
\Crefname{definition}{Definition}{Definitions}

%
\ifCLASSINFOpdf
\else
\fi
%
%


\begin{document}

\title{Maximum Throughput in Multiple-Antenna Systems}



%

\author{\IEEEauthorblockN{Mahdi Zamani and Amir K. Khandani}
\IEEEauthorblockA{\\ 
Department of Electrical and Computer Engineering\\
University of Waterloo, 
Waterloo, ON N2L 3G1\\
Emails: \{mzamani, khandani\}@cst.uwaterloo.ca}
}

\maketitle
\thispagestyle{empty}
\pagestyle{empty}


\begin{abstract}

The point-to-point multiple-antenna channel is investigated in uncorrelated block fading environment with Rayleigh distribution. The maximum throughput and maximum expected-rate of this channel are derived under the assumption that the transmitter is oblivious to the channel state information (CSI), however, the receiver has perfect CSI.
First, we prove that in multiple-input single-output (MISO) channels, the optimum transmission strategy maximizing the throughput is to use all available antennas and perform equal power allocation with uncorrelated signals. 
Furthermore, to increase the expected-rate, multi-layer coding is applied. Analogously, we establish that sending uncorrelated signals and performing equal power allocation across all available antennas at each layer is optimum. 
A closed form expression for the maximum continuous-layer expected-rate of MISO channels is also obtained.
Moreover, we investigate multiple-input multiple-output (MIMO) channels, and formulate the maximum throughput in the asymptotically low and high SNR regimes and also asymptotically large number of transmit or receive antennas by obtaining the optimum transmit covariance matrix. 
Finally, a distributed antenna system, wherein two single-antenna transmitters want to transmit a common message to a single-antenna receiver, is considered. It is shown that this system has the same outage probability and hence, throughput and expected-rate, as a point-to-point $2\times 1$ MISO channel.


\end{abstract}
\IEEEpeerreviewmaketitle

\let\thefootnote\relax\footnotetext{Financial supports provided by Natural Sciences and Engineering Research Council of Canada (NSERC) and Ontario Ministry of Research \& Innovation (ORF-RE) are gratefully acknowledged.}


\section{Introduction}
The information theoretic aspects of wireless fading channels have received wide attention \cite{biglieri}.   
The growing demand for QoS and network coverage inspires the use of multiple-antenna arrays at the transmitter and/or receiver \cite{foschini1996layered,tarokh1998space,tarokh1999space,wolniansky1998v}. It has been shown that multiple-antenna arrays have the ability to reach higher transmission rates \cite{telatar , marzetta,foschini1998limits}.
With no delay constraint, the ergodic nature of the fading channel can be experienced by sending very large transmission blocks, and the ergodic capacity is well studied \cite{biglieri}. 
When the channel variation is slow, the channel can be estimated relatively accurately at the receiver. By assuming perfect CSI at the receiver but no CSI at the transmitter, Telatar \cite{telatar} showed that the ergodic capacity of general MIMO channels is achieved by sending an uncorrelated circularly symmetric zero mean equal power complex Gaussian codebook on all transmit antennas.

Due to the stringent delay constraint for the problem in consideration, the transmission block length is forced to be shorter than the dynamics of the slow fading process, though still large enough to yield a reliable communication. The performance of such channels are usually evaluated by outage capacity. The notion of capacity versus outage was introduced in \cite{biglieri,ozarow}.
Jorswieck and Boch \cite{jorswieck} proved that in uncorrelated MISO channels, the optimum transmit strategy minimizing the outage probability is to use a fraction of all available transmit antennas and perform equal power allocation with uncorrelated signals.

The maximum throughput is an important performance measure in block fading channels \cite{ahmed2003throughput}, which is defined as the maximum of the product of the transmission rate and the probability of successful transmission using a single-layer code (see \cref{def-throughput}). 
As mentioned in \cite{jorswieck}, their results on the outage probability cannot be directly applied to this metric due to the maximization.
In this paper, we prove that to achieve the maximum throughput in an uncorrelated MISO channel, the optimum transmit strategy is to send equal power uncorrelated signals from all available antennas (see \cref{miso-single-theorem}).

The maximum average achievable rate is another performance measure which is important in some applications.
A good example for such applications is a TV broadcasting system where users with better channels can receive additional services such as high definition TV signals \cite{reza}. 
Due to the large number of users, the transmitter cannot access the CSI.
In order to increase the average achievable rate, Shamai and Steiner \cite{shamai} proposed a broadcast approach (multi-layer coding) for a point-to-point block fading channel with no CSI at the transmitter. 
Since the average achievable rate increases with the number of code layers,
they reached the highest average achievable rate using a continuous-layer (infinite-layer) code. 
This idea was applied to a two-hop single-relay channel in \cite{steiner,vahid}, a channel with two collocated cooperative users in \cite{steiner2}, and a two-hop parallel-relay network (the diamond channel) in \cite{mine2011isit}. Multi-layer coding can also achieve the maximum average achievable rate in a block fading multiple-access channel with no CSI at the transmitters \cite{tse}. The optimized trade-off between the QoS and network coverage in a multicast network was derived in \cite{reza} using the broadcast approach.
Here, we derive the maximum expected-rate of MISO channels, which is defined as the maximum average decodable rate when a multi-layer code is transmitted (see \cref{def-expected-rate}).
\Cref{miso-multi-theorem} proves that to maximize the expected-rate in MISO channels, it is optimum to transmit equal power independent signals on all available antennas in each layer. 
Using the continuous-layer coding approach, the maximum expected-rate of MISO channels is then obtained and formulated in closed form in \cref{miso-infinite-proposition}.  

To evaluate the maximum throughput in uncorrelated MIMO channels, the distribution of the instantaneous mutual information is crucial. In \cite{wang,hochwald}, it is shown that the distribution of the instantaneous mutual information in MIMO channels is always very close to the 
Gaussian distribution.
The mean and variance of this equivalent Gaussian distribution were derived in \cite{hochwald} for asymptotic ranges of the number of antennas.
As this distribution is not tractable in general MIMO channels, here we consider four asymptotic cases: asymptotically low SNR regime, asymptotically high SNR regime, asymptotically large number of transmit antennas, and asymptotically large number of receive antennas. In all four cases, the optimum covariance matrix is obtained and the maximum throughput expression is derived.

Finally, the maximum throughput and maximum expected-rate of a distributed antenna system with two single-antenna transmitters and one single-antenna receiver is obtained. It is also proved that any achievable throughput, expected-rate, ergodic capacity, and outage capacity in a MISO channel with two transmit antennas are also achievable in this channel.

The rest of this paper is organized as follows. 
In \cref{prelimeneries}, the preliminaries are presented.
The maximum throughput and the maximum expected-rate of MISO channels are derived in \cref{miso-single-section,miso-multi-section}, respectively.
The maximum throughputs in four asymptotic cases of MIMO channels are obtained in \cref{mimo-section}. In \cref{distributed-section}, a distributed antenna system with two transmitters is analyzed. 
Finally, \cref{conclusion-section} concludes the paper.


\section{Preliminaries}   \label{prelimeneries}


\subsection{Notation}

Throughout the paper, we represent the probability of event $A$ by $\Pr\{A\}$, and the expected and variance operations by $\mathbb{E}(\cdot)$ and $\text{Var}(\cdot)$, respectively. The notation ``$\ln$'' is used for natural logarithm, and rates are expressed in \emph{nats}. We denote $f_{\mathrm x}(\cdot)$ and $F_{\mathrm x}(\cdot)$ as the probability density function (PDF) and the cumulative density function (CDF) of random variable $\mathrm x$, respectively. For any function $F(x)$, let us define $\overline F(x) \Def 1-F(x)$ and $F^{\prime}(x) \Def \frac{\ud F(x)}{\ud x}$. $\vec{X}$ is a vector, $\mathbf{Q}$ is a matrix, and $\text{tr}(\mathbf{Q})$ denotes the trace of $\mathbf{Q}$. $\mathbf{I}_{n_t}$ denotes the $n_t \times n_t$ identity matrix. $s^o$ is the optimum solution with respect to the variable $s$. We denote the conjugation, matrix transpose, and matrix conjugate transpose operators by $^*$, $^\text{T}$, and $^\dag$, respectively. $\Re(\cdot)$ and $\Im(\cdot)$ represent the real and imaginary parts of complex variables and $|\cdot|$ represents the absolute value or modulus operator. ``$\det$'' is used for the determinant operator and $\text{eig}_{\ell} (\mathbf Q)$ is the $\ell$'th ordered eigenvalue of matrix $\mathbf Q$. Let $h_{\ell}$ denote the $\ell$'th component of vector $\vec h$, and $h_{\ell,k}$ denote the $(\ell,k)$'th entry of matrix $\mathbf H$.
$\mathcal{CN}(0,1)$ denotes the complex circularly symmetric Gaussian distribution with zero mean and unit variance and $\mathcal{N}(\mu,\sigma^2)$ denotes the Gaussian distribution with mean $\mu$ and variance $\sigma^2$. $\mathcal W_0(\cdot)$ is the zero branch of the Lambert $W$-function, also called the omega
function, which is the inverse function of $f(W)=We^W$ \cite{corless}. $\text E_1(x)$ is the exponential integral function, which is $\int_x^\infty \frac{e^{-t}}{t} \ud t,~ x \geq 0$. $\Gamma(n,x)\Def \int_x^\infty t^{n-1} e^{-t} \ud t$ is the upper incomplete gamma function, and $\Gamma(n) \Def \Gamma(n,0)$. $\digamma(n) \Def \frac{\Gamma^{\prime}(n)}{\Gamma(n)}$ and $\mathcal Q(x) \Def \frac{1}{\sqrt{2\pi}}\int_x^{\infty} e^{-\frac{t^2}{2}}\ud t$ represent the E$\ddot{\text{u}}$ler's digamma function \cite{gradshtein2000table} and $\mathcal Q$-function, respectively. 


\subsection{Problem Setup} \label{problem-setup}

A MIMO channel with $n_t$ transmit antennas and $n_r$ receive antennas is defined as a channel with the following input-output relationship:
\begin{align}
\vec Y = \mathbf H \vec X + \vec Z,
\end{align}
where $\vec Y$ is the received signal, $\mathbf H \sim \left[\mathcal{CN} (0,1)\right]_{n_r\times n_t}$ is the channel matrix, $\vec Z \sim \left[\mathcal{CN}(0,1)\right]_{n_r \times 1}$ is the independent and identically distributed (i.i.d.) additive white Gaussian noise (AWGN), and $\vec X$ is the transmitted signal under the following total power constraint:
\begin{align}
\mathbb E\left( \vec X^\dag \vec X \right)=\mathbb E\left(\text{tr}\left( \vec X \vec X^\dag \right)\right)=\text{tr}\left( \mathbb E \left( \vec X \vec X^\dag \right)\right) \leq P.
\end{align}
Defining $\mathbf Q$ as the transmit covariance matrix, i.e., $\mathbf Q=\mathbb E\left( \vec X \vec X^\dag \right)$, the instantaneous mutual information is
\begin{align} \label{instantaneous-mutual-information-mimo-general}
\mathcal I = \ln \det \left( \mathbf I_{n_r} + \mathbf H \mathbf Q \mathbf H^\dag \right) = \ln \det \left( \mathbf I_{n_t} + \mathbf Q \mathbf H^\dag \mathbf H \right).
\end{align}
In a MISO channel, the channel coefficients are represented by a vector
$\vec h^T \sim \left[\mathcal{CN}(0,1)\right]_{n_t\times 1}$, and 
\begin{align}
Y = \vec h \vec X + Z.
\end{align}

In the following, the performance metrics which are widely used throughout the paper are defined.
\begin{definition} \label{def-throughput}
The throughput $\mathcal R_{s}$ is the average achievable rate when a single-layer code with a fixed rate $R$ is transmitted, i.e., the transmission rate times the probability of successful transmission.
The maximum throughput, namely $\mathcal R_{s}^m$, is the maximum of the throughput 
over all transmit covariance matrices $\mathbf Q$, and transmission rates $R$. Mathematically,
\begin{align} \label{throughput-definition-formula}
\mathcal R_{s}^m  \Def  \max_{
\begin{subarray}{c}
R ,\mathbf Q \\
\emph{tr}(\mathbf Q) \leq P
\end{subarray}}
 \Pr \left\{\mathcal I \geq R \right\} R.
\end{align}
\end{definition}
\begin{definition} \label{def-expected-rate}
The expected-rate $\mathcal R_{f}$ is the average achievable rate when a multi-layer code is transmitted, i.e., the statistical expectation of the achievable rate. 
The maximum expected-rate, namely $\mathcal R_{f}^m$, is the maximum of the expected-rate over all transmit covariance matrices and transmission rates in each layer, and all power distributions of the layers. Mathematically,
\begin{align} \label{finite-layer-expected-rate-definition-formula}
\mathcal R_{f}^m  \Def  \max_{
\begin{subarray}{c}
R_i, P_i ,\mathbf Q_i \\
\emph{tr}(\mathbf Q_i) \leq P_i \\
\sum_{i=1}^K P_i = P
\end{subarray}}
\sum_{i=1}^K \Pr \left\{\mathcal I_i \geq R_i \right\} R_i,
\end{align}
where $R_i$, $\mathbf Q_i$, and $\mathcal I_i$ are the transmission rate, transmit covariance matrix, and instantaneous mutual information in the $i$'th layer, respectively.
 
If a continuum of code layers are transmitted, the maximum continuous-layer (infinite-layer) expected-rate, namely $\mathcal R_{c}^m$, is given by maximizing the continuous-layer expected-rate over the layers' power distribution.
\end{definition}

\begin{definition} \label{def-ergodic-capacity}
The ergodic capacity $C_{\emph{erg}}$ is the maximum expected value of the instantaneous mutual information $\mathcal I$ over all transmit covariance matrices $\mathbf Q$. Mathematically,
\begin{align} \label{ergodic-capacity-definition-formula}
C_{\emph{erg}}  \Def  \max_{
\begin{subarray}{c}
\mathbf Q \\
\emph{tr}(\mathbf Q) \leq P
\end{subarray}}
 \mathbb E \left(\mathcal I \right).
\end{align}
\end{definition}

The main focus of this paper is to solve the following problems.
\begin{problem} \label{miso-single-prob}
To obtain the optimum transmit covariance matrix, denoted by $\mathbf Q^o$, which maximizes the throughput $\mathcal R_{s}$ in the MISO channel. 
\end{problem}

\Cref{miso-single-theorem} proves that the optimum transmit strategy is to transmit uncorrelated signals on all antennas with equal powers, i.e., $\mathbf Q^o = \frac{P}{n_t}\mathbf I_{n_t}$, and provides the maximum throughput expression.

\begin{problem} \label{miso-multi-prob}
To derive the optimum transmit covariance matrix in each layer, i.e., $\mathbf Q_i^o$, for finite-layer coding in the MISO channel, which maximizes the expected-rate $\mathcal R_{f}$. 
\end{problem}

As we shall see in \cref{miso-multi-theorem}, the optimum transmit covariance matrix in each layer is in the form of $\mathbf Q_{i}^o = \frac{P_{i}}{n_t}\mathbf I_{n_t}$, and the maximum expected-rate is given by \cref{miso_avg_max_ml}.

\begin{problem} \label{miso-infinite-prob}
To derive the maximum continuous-layer expected-rate $\mathcal R_{c}^m$ in the MISO channel.
\end{problem}
The closed form expression of the maximum continuous-layer expected-rate is derived in \cref{miso-infinite-proposition}.

In the MIMO channel, the PDF of the instantaneous mutual information $\mathcal I$ is not known even for the simplest case of $\mathbf Q = \frac{P}{n_t}\mathbf I_{n_t}$, although there are some approximations in literature for asymptotic cases. In the next step, the maximum throughputs in four asymptotic cases of the MIMO channel are addressed. 
\begin{problem} \label{mimo-prob}
To derive the maximum throughput of the MIMO channel in asymptotically
\begin{itemize}
\item low SNR regime
\item high SNR regime
\item large number of transmit antennas
\item large number of receive antennas
\end{itemize} 
\end{problem}

Different MIMO approximations are exploited to solve \cref{mimo-prob}. For asymptotically low SNR regime, the MISO results are carried over and the maximum throughput and maximum expected-rate are formulated. For asymptotically high SNR regime, Wishart distribution properties \cite{verdu_book} are used to obtain the maximum throughput. For asymptotically large number of transmit or receive antennas, Gaussian approximations for the instantaneous mutual information presented in \cite{hochwald} are utilized. As we shall see in \cref{mimo-section}, in all aforementioned asymptotic regimes, the optimum transmit covariance matrix which maximizes the throughput is $\mathbf Q^o = \frac{P}{n_t} \mathbf I_{n_t}$.

In the last problem, a distributed antenna system consisting of two single-antenna transmitters with common messages and a single-antenna receiver is considered.  
\begin{problem} \label{distributed-prob}
To find the minimum outage probability, the maximum throughput, and the maximum expected-rate in a two-transmitter distributed antenna system. 
\end{problem}
\Cref{distributed-theorem} 
establishes that any achievable outage probability in the $2\times 1$ MISO channel is also achievable in the two-transmitter distributed antenna system in \cref{distributed-prob}. Hence, both channels experience the same instantaneous mutual information distribution and thereby, all MISO channel results are applied here with $n_t = 2$.


\subsection{A Few Useful Propositions}

In the following, we present three propositions which are used throughout the paper and they are also of independent interest.

\begin{proposition} \label{Markov-throughput-ergodic-lemma}
In fading channels, the maximum throughput is less than or equal to the ergodic capacity.
\end{proposition}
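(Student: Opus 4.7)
The name of the proposition essentially gives the plan away: the aim is to apply Markov's inequality to the instantaneous mutual information. Concretely, the instantaneous mutual information $\mathcal I = \ln\det(\mathbf I_{n_r}+\mathbf H\mathbf Q\mathbf H^\dagger)$ is a non-negative random variable for every admissible $\mathbf Q$, so for any fixed rate $R>0$ Markov's inequality gives
\begin{align*}
\Pr\{\mathcal I \geq R\}\,R \;\leq\; \mathbb E(\mathcal I).
\end{align*}
The inequality holds pointwise in $(\mathbf Q,R)$, so I can take the supremum of the left-hand side over all $R\geq 0$ and all $\mathbf Q$ with $\text{tr}(\mathbf Q)\leq P$, and the supremum of the right-hand side over the same covariance constraint, to obtain
\begin{align*}
\mathcal R_s^m \;=\; \max_{R,\,\mathbf Q:\,\text{tr}(\mathbf Q)\leq P} \Pr\{\mathcal I\geq R\}\,R \;\leq\; \max_{\mathbf Q:\,\text{tr}(\mathbf Q)\leq P} \mathbb E(\mathcal I) \;=\; C_{\mathrm{erg}},
\end{align*}
which is exactly the claim, using \cref{def-throughput} and \cref{def-ergodic-capacity}.

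The case $R=0$ is trivial (both sides vanish on the left), so there is no edge case to worry about. There is really only one potential subtlety: making sure that taking the maximum over $R$ on the left and over $\mathbf Q$ on the right is legitimate. This is fine because for each fixed $\mathbf Q$, Markov's inequality yields $\sup_R \Pr\{\mathcal I\geq R\}R\leq \mathbb E(\mathcal I|\mathbf Q)$, and then maximizing both sides over $\mathbf Q$ preserves the inequality. I would state this in one line to avoid any ambiguity.

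Overall, the proof is a two-line application of Markov's inequality and does not need any of the heavier machinery developed later in the paper. The only thing worth emphasizing is the non-negativity of $\mathcal I$, which justifies Markov, and the fact that throughput as defined in \cref{throughput-definition-formula} is exactly the quantity $\Pr\{\mathcal I\geq R\}R$ that Markov naturally bounds. I do not anticipate any real obstacle.
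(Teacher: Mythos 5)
Your proposal is correct and is essentially identical to the paper's own proof: both apply Markov's inequality to the non-negative random variable $\mathcal I$ to get $\Pr\{\mathcal I \geq R\}R \leq \mathbb E(\mathcal I)$ for each $(R,\mathbf Q)$, then maximize over $R$ and $\mathbf Q$ and identify the right-hand side with the ergodic capacity. No gaps.
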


\begin{proof}
The proof is based on the Markov inequality \cite{papoulis2002probability}, that is if $f(x)=0$ for $x<0$, then, for $\alpha>0$, $\Pr \left\{x\geq\alpha\right\}\leq \frac{\mathbb E(x)}{\alpha}$. Therefore, $\forall R>0$,
\begin{align} \label{Markov-throughput-ergodic1}
\Pr\left\{ \mathcal I\geq R\right\}\leq \frac{\mathbb E\left(\mathcal I\right)}{R},
\end{align}
so that
\begin{align} \label{Markov-throughput-ergodic}
\mathcal R_{s}^m  = \max_{
\begin{subarray}{c}
R ,\mathbf Q \\
\text{tr}(\mathbf Q) \leq P
\end{subarray}}
 \Pr \left\{\mathcal I \geq R \right\} R
 \leq \max_{
\begin{subarray}{c}
\mathbf Q \\
\text{tr}(\mathbf Q) \leq P
\end{subarray}} \mathbb E\left(\mathcal I\right),
\end{align}
and \cref{Markov-throughput-ergodic} results because $\max_{
\begin{subarray}{c}
\mathbf Q, \text{tr}(\mathbf Q) \leq P
\end{subarray}} \mathbb E\left(\mathcal I\right)$ equals the ergodic capacity.
\end{proof}

\begin{proposition} \label{Markov-expected-rate-ergodic-lemma}
In fading channels, the maximum expected-rate is less than or equal to the ergodic capacity.
\end{proposition}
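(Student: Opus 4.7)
The plan is to mimic the proof of \cref{Markov-throughput-ergodic-lemma} but apply Markov's inequality layer-by-layer, then collapse the resulting sum of per-layer mutual informations into a single total mutual information via the telescoping (chain-rule) identity for successive decoding, and finally bound the result by $C_{\text{erg}}$.

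First, I would fix any finite-layer code with parameters $\{R_i, \mathbf Q_i, P_i\}_{i=1}^K$ satisfying $\text{tr}(\mathbf Q_i)\leq P_i$ and $\sum_i P_i = P$, and apply the Markov-inequality step used in \cref{Markov-throughput-ergodic-lemma} to each layer separately, obtaining $\Pr\{\mathcal I_i \geq R_i\}\,R_i \leq \mathbb E(\mathcal I_i)$. Summing over $i$ yields
\[
\mathcal R_{f} = \sum_{i=1}^K \Pr\{\mathcal I_i \geq R_i\}\,R_i \;\leq\; \mathbb E\!\left(\sum_{i=1}^K \mathcal I_i\right).
\]

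The next key step is an algebraic identity. In the broadcast / superposition multi-layer framework the layers are decoded successively while treating undecoded layers as additive Gaussian noise, so the per-layer instantaneous mutual information takes the form
\[
\mathcal I_i = \ln\det\!\left(\mathbf I_{n_r}+\mathbf H\!\sum_{j\geq i}\mathbf Q_j\,\mathbf H^\dag\right)-\ln\det\!\left(\mathbf I_{n_r}+\mathbf H\!\sum_{j> i}\mathbf Q_j\,\mathbf H^\dag\right).
\]
Summing over $i$ the series telescopes to $\ln\det(\mathbf I_{n_r}+\mathbf H\mathbf Q\mathbf H^\dag)$, where $\mathbf Q \Def \sum_{i=1}^K \mathbf Q_i$ satisfies $\text{tr}(\mathbf Q)\leq \sum_i P_i = P$. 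Taking expectations and then maximising over admissible $\mathbf Q$ produces exactly $C_{\text{erg}}$ by \cref{def-ergodic-capacity}, so $\mathcal R_f \leq C_{\text{erg}}$ for every finite-layer code, and the optimum $\mathcal R_f^m$ inherits the same bound. The continuous-layer expected-rate is obtained as the supremum of finite-layer expected-rates as $K\to\infty$ (with per-layer powers tending to zero in a compatible way), so the bound passes to the limit and gives $\mathcal R_c^m \leq C_{\text{erg}}$ as well.

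The main obstacle I anticipate is verifying that the per-layer $\mathcal I_i$ implicit in \cref{def-expected-rate} coincides with the successive-decoding expression above so that the telescoping identity applies; once that correspondence is pinned down, the remainder of the argument is a straightforward layered version of \cref{Markov-throughput-ergodic-lemma}.
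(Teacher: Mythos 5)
Your proposal is correct and follows essentially the same route as the paper's proof: Markov's inequality applied per layer (i.e., \cref{Markov-throughput-ergodic-lemma} layer-by-layer), the telescoping chain-rule identity collapsing $\sum_i \mathcal I_i$ to $\ln\det(\mathbf I + \mathbf Q\mathbf H^\dag\mathbf H)$ with $\mathbf Q=\sum_i\mathbf Q_i$ and $\text{tr}(\mathbf Q)\leq P$, and then the maximization giving $C_{\text{erg}}$. The successive-decoding form of $\mathcal I_i$ you flag as a possible obstacle is exactly the expression the paper uses, so no gap remains.
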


\begin{proof}
From \cref{finite-layer-expected-rate-definition-formula} it follows that
\begin{align} \label{Markov-expected-rate-ergodic1}
\mathcal R_{f}^m  &=  \max_{
\begin{subarray}{c}
R_i, P_i ,\mathbf Q_i \\
\text{tr}(\mathbf Q_i) \leq P_i \\
\sum_{i=1}^K P_i = P
\end{subarray}}
\sum_{i=1}^K \Pr \left\{\mathcal I_i \geq R_i \right\} R_i \nonumber \\
&\stackrel{(a)}{\leq}
\max_{
\begin{subarray}{c}
 P_i ,\mathbf Q_i \\
\text{tr}(\mathbf Q_i) \leq P_i \\
\sum_{i=1}^K P_i = P
\end{subarray}}
\sum_{i=1}^K \mathbb E\left(\mathcal I_i\right) \nonumber\\
&\stackrel{(b)}{=} \max_{
\begin{subarray}{c}
 P_i ,\mathbf Q_i \\
\text{tr}(\mathbf Q_i) \leq P_i \\
\sum_{i=1}^K P_i = P
\end{subarray}} \mathbb E\left(
\sum_{i=1}^K \mathcal I_i \right) \nonumber\\
&= \max_{
\begin{subarray}{c}
 P_i ,\mathbf Q_i \\
\text{tr}(\mathbf Q_i) \leq P_i \\
\sum_{i=1}^K P_i = P
\end{subarray}} \mathbb E\left(
\sum_{i=1}^K \ln \frac{\det \left( \mathbf I_{n_t} +\sum_{j=i}^K \mathbf Q_j \mathbf H^\dag \mathbf H \right)}{\det \left( \mathbf I_{n_t} +\sum_{j=i+1}^K \mathbf Q_j \mathbf H^\dag \mathbf H \right)}  \right) \nonumber\\
&= \max_{
\begin{subarray}{c}
 P_i ,\mathbf Q_i \\
\text{tr}(\mathbf Q_i) \leq P_i \\
\sum_{i=1}^K P_i = P
\end{subarray}} \mathbb E\left(
\ln \prod_{i=1}^K \frac{\det \left( \mathbf I_{n_t} +\sum_{j=i}^K \mathbf Q_j \mathbf H^\dag \mathbf H \right)}{\det \left( \mathbf I_{n_t} +\sum_{j=i+1}^K \mathbf Q_j \mathbf H^\dag \mathbf H \right)}  \right) \nonumber\\
&= \max_{
\begin{subarray}{c}
 P_i ,\mathbf Q_i \\
\text{tr}(\mathbf Q_i) \leq P_i \\
\sum_{i=1}^K P_i = P
\end{subarray}} \mathbb E\left(
\ln \det \left( \mathbf I_{n_t} +\sum_{i=1}^K \mathbf Q_i \mathbf H^\dag \mathbf H \right) \right),
\end{align} 
where $(a)$ follows from  \cref{Markov-throughput-ergodic-lemma}, and $(b)$ follows from the fact that expectation and summation commute.
Defining $\mathbf Q \Def \sum_{i=1}^K \mathbf Q_i$, we get
\begin{align} \label{Markov-expected-rate-ergodic2}
\text{tr}\left(\mathbf Q \right) = \text{tr}\left(\sum_{i=1}^K \mathbf Q_i \right) =\sum_{i=1}^K \text{tr}\left(\mathbf Q_i \right) \leq \sum_{i=1}^K P_i = P.
\end{align}
Inserting \cref{Markov-expected-rate-ergodic2} into \cref{Markov-expected-rate-ergodic1}, we obtain
\begin{align} \label{Markov-expected-rate-ergodic}
\mathcal R_{f}^m  \leq \max_{
\begin{subarray}{c}
 \mathbf Q \\
\text{tr}(\mathbf Q) \leq P \\
\end{subarray}} \mathbb E\left(
\ln \det \left( \mathbf I_{n_t} + \mathbf Q \mathbf H^\dag \mathbf H \right) \right) 
= \max_{
\begin{subarray}{c}
\mathbf Q \\
\text{tr}(\mathbf Q) \leq P
\end{subarray}} \mathbb E\left(\mathcal I\right).
\end{align}
and \cref{Markov-expected-rate-ergodic} results because $\max_{
\begin{subarray}{c}
\mathbf Q, \text{tr}(\mathbf Q) \leq P
\end{subarray}} \mathbb E\left(\mathcal I\right)$ equals the ergodic capacity.
\end{proof}

\Cref{Markov-throughput-ergodic-lemma,Markov-expected-rate-ergodic-lemma} lead to the fact that the maximum throughput and maximum expected-rate are upper-bounded by the ergodic capacity. 
\Cref{ergodic-capacity-miso-pro} presents the ergodic capacity of the MISO channel in closed form.

\begin{proposition} \label{ergodic-capacity-miso-pro}
The ergodic capacity in an $n_t \times 1$ MISO Rayleigh fading channel with total power constraint $P$ is given by
\begin{align} \label{ergodic-capacity-miso-formula}
C_{\emph{erg}} &= e^{\frac{n_t}{P}} \emph{E}_1 \left( \frac{n_t}{P} \right) \sum_{\ell=0}^{n_t-1} \frac{\left(-n_t \right)^\ell}{\ell!P^\ell} \nonumber\\
&+ \sum_{\ell=1}^{n_t-1} \sum_{k=0}^{\ell-1} \frac{\left(-1 \right)^k}{\left(\ell-k\right)k!} \sum_{m=0}^{\ell-k-1} \frac{\left(n_t \right)^{k+m}}{m!P^{k+m}},
\end{align}
where $\emph{E}_1 \left(\cdot\right)$ is the exponential integral function.
The ergodic capacity in a $1 \times n_r$ single-input multiple-output (SIMO) channel with total power constraint $P$ equals the ergodic capacity of an $n_r \times 1$ MISO channel with total power constraint $n_r P$.
\end{proposition}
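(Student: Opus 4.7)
The plan is to use Telatar's characterization to reduce the problem to a one-dimensional integral, then evaluate that integral by integration by parts followed by a binomial expansion. Since the channel is MISO Rayleigh with no CSI at the transmitter, Telatar's result (cited in the introduction) gives that the ergodic-capacity-achieving covariance in \cref{ergodic-capacity-definition-formula} is $\mathbf Q^o=(P/n_t)\mathbf I_{n_t}$. Substituting into \cref{instantaneous-mutual-information-mimo-general} yields $\mathcal I=\ln\bigl(1+(P/n_t)\|\vec h\|^2\bigr)$, and since $\|\vec h\|^2$ is a sum of $n_t$ independent unit-mean exponentials it has density $f(x)=x^{n_t-1}e^{-x}/(n_t-1)!$. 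Hence
\begin{align*}
C_{\text{erg}}=\int_0^\infty \ln\!\Bigl(1+\tfrac{P}{n_t}x\Bigr)\frac{x^{n_t-1}e^{-x}}{(n_t-1)!}\,\ud x.
\end{align*}

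I would then integrate by parts with $u=\ln(1+(P/n_t)x)$ and $\ud v=x^{n_t-1}e^{-x}\,\ud x/(n_t-1)!$, using the antiderivative $v=-\Gamma(n_t,x)/(n_t-1)!=-e^{-x}\sum_{j=0}^{n_t-1}x^j/j!$. The boundary terms vanish ($\ln 1=0$ at the origin, and the logarithm at infinity is dominated by the exponential decay of $v$), leaving
\begin{align*}
C_{\text{erg}}=\sum_{j=0}^{n_t-1}\frac{1}{j!}\int_0^\infty \frac{(P/n_t)\,x^j e^{-x}}{1+(P/n_t)x}\,\ud x.
\end{align*}
The substitution $u=1+(P/n_t)x$ pulls out a factor $e^{n_t/P}$ and turns each inner integral into $(n_t/P)^j\int_1^\infty (u-1)^j u^{-1}e^{-u n_t/P}\,\ud u$. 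Expanding $(u-1)^j$ by the binomial theorem separates the $k=0$ piece, which evaluates to $\text{E}_1(n_t/P)$, from the $k\geq 1$ pieces, each of which gives $(P/n_t)^k\Gamma(k,n_t/P)$; substituting the finite-sum representation $\Gamma(k,n_t/P)=(k-1)!\,e^{-n_t/P}\sum_{m=0}^{k-1}(n_t/P)^m/m!$ cancels the $e^{n_t/P}$ prefactor on the polynomial branch. Using the identity $\binom{j}{k}(k-1)!/j!=1/[k(j-k)!]$ and re-indexing $k\mapsto \ell-k$ on the polynomial branch converts the double sum into the inner sums $\sum_{k=0}^{\ell-1}\sum_{m=0}^{\ell-k-1}$ displayed in \cref{ergodic-capacity-miso-formula}.

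For the second statement, in the $1\times n_r$ SIMO channel $\mathcal I=\ln\det(\mathbf I+P\vec h\vec h^\dag)=\ln(1+P\|\vec h\|^2)$ with $\|\vec h\|^2\sim\text{Gamma}(n_r,1)$, while in an $n_r\times 1$ MISO channel at total power $n_r P$ the optimum covariance $\mathbf Q^o=P\mathbf I_{n_r}$ gives $\mathcal I=\ln(1+P\|\vec g\|^2)$ with $\|\vec g\|^2\sim\text{Gamma}(n_r,1)$; the two mutual informations are thus equal in distribution, and taking expectations gives equal ergodic capacities. I expect the only real obstacle to be combinatorial bookkeeping: the binomial expansion produces a triple sum whose alternating signs and factorials must be carefully tracked, and the index reshuffle $k\mapsto\ell-k$ has to be done correctly to recognize the result as the form stated in \cref{ergodic-capacity-miso-formula}; everything else reduces to the Gamma-PDF representation of $\|\vec h\|^2$ and standard identities for $\text{E}_1$ and the upper incomplete gamma function.
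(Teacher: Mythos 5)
Your proposal is correct and follows essentially the same route as the paper's appendix: reduce to the one-dimensional Gamma-density integral, integrate by parts against $-\Gamma(n_t,x)/(n_t-1)!$, substitute $t=1+ (P/n_t)x$, expand binomially to split off the $\text{E}_1$ term from the finite incomplete-gamma sums, and re-index $k\mapsto\ell-k$. The only cosmetic difference is that the paper evaluates the SIMO integral at power $P$ and substitutes $P\to P/n_t$, $n_r\to n_t$ at the very end, whereas you carry the $P/n_t$ scaling through from the start; the algebra is identical.
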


\begin{proof}
We offer the proof in appendix \ref{appendix-ergodic-capacity-miso-pro}.
\end{proof}

\section{Maximum Throughput in MISO Channels} \label{miso-single-section}

Let the transmitted signal $\vec X$ be a single-layer code with rate $R=\ln \left( 1+P s  \right)$. In the MISO channel, the maximum throughput in \cref{throughput-definition-formula} can be rewritten as 
\begin{align} \label{miso_single1}
\mathcal R_{s}^m = \max_{\begin{subarray}{c}
R,\mathbf{Q} \\
\text{tr}(\mathbf{Q})\leq P 
\end{subarray}} \Pr \left\{ \ln \left(1+\vec h \mathbf{Q} \vec h^{\dag} \right) \geq R  \right\}R,
\end{align}
where $\mathbf{Q}$ is the covariance matrix of $\vec X$, i.e., $\mathbf{Q}= \mathbb E \left( \vec X \vec X^{\dag}\right)$. 

For transmission rate $R$, the throughput is $\mathcal R_{s} = \overline{\mathcal P}_{\text{out}}(R) R$, where $\mathcal P_{\text{out}}(R)$ is the outage probability of a fixed transmission rate $R$. 
It is proved in \cite{jorswieck} that the optimum transmit strategy minimizing the outage probability is to send uncorrelated circularly symmetric zero mean equal power complex Gaussian signals from a fraction of antennas. Thus, here, one can restrict the transmit covariance matrix $\mathbf{Q}$ to diagonal matrices whose diagonal entries are either zero or a constant subject to the total power constraint $P$.

In following, \cref{miso-single-theorem} proves that the optimum solution with respect to $R$, denoted by $R^o$, maximizing $\overline{\mathcal P}_{\text{out}}(R) R$ is less than $\ln \left( 1+P \right)$. In this range of the transmission rate, the optimum transmit strategy which minimizes the outage probability and consequently, maximizes the throughput is to use all available antennas. \Cref{miso-sl-rate-max} yields the maximum throughput of an $n_t \times 1$ MISO block Rayleigh fading channel.


\begin{theorem} \label{miso-single-theorem}
In a single-layer $n_t \times 1$ MISO block Rayleigh fading channel, the optimum transmit covariance matrix which maximizes the throughput is $\mathbf{Q^o}=\frac{P}{n_t} \mathbf I_{n_t}$. 
The maximum throughput is given by
\begin{align} \label{miso-sl-rate-max}
\mathcal R_{s}^m = \max_{0 < s < 1} \frac{\Gamma(n_t,n_t s)}{(n_t-1)!} \ln\left( 1+P s \right).
\end{align}
\end{theorem}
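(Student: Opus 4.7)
The plan is to combine the outage-minimization result of \cite{jorswieck} with a one-dimensional optimization over the rate. By that result, for any fixed rate $R$ the outage-minimizing covariance takes the form $\mathbf Q = \tfrac{P}{k}\mathbf D_k$, where $\mathbf D_k$ is diagonal with $k$ ones and $n_t-k$ zeros. With this restriction $\vec h\mathbf Q\vec h^\dag = \tfrac{P}{k}G_k$, where $G_k \Def \sum_{\ell=1}^k|h_\ell|^2 \sim \mathrm{Gamma}(k,1)$, so the success probability is $\Gamma(k,ks)/(k-1)!$ for $s \Def (e^R-1)/P$. Writing $R=\ln(1+Ps)$ turns the maximization into
\begin{align*}
\mathcal R_s^m = \max_{1\le k\le n_t,\,s>0}\, g_k(s),\qquad g_k(s) \Def \tfrac{\Gamma(k,ks)}{(k-1)!}\ln(1+Ps),
\end{align*}
and the task becomes showing that the maximizer is $k=n_t$ with $s^o\in(0,1)$.

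The pivotal step is to prove $s^o<1$. I would in fact establish the stronger statement that $g_k'(s)<0$ for every $s\ge 1$. Using $\tfrac{\ud}{\ud s}\Gamma(k,ks) = -k^k s^{k-1}e^{-ks}$ and letting $\mathrm h_k(x)\Def x^{k-1}e^{-x}/\Gamma(k,x)$ denote the hazard rate of $\mathrm{Gamma}(k,1)$, the condition $g_k'(s)<0$ is equivalent to $k\,\mathrm h_k(ks) > P/[(1+Ps)\ln(1+Ps)]$. Because $\mathrm h_k$ is non-decreasing (Gamma has increasing failure rate for integer shape) and the right-hand side is decreasing in $s$, it is enough to verify the inequality at $s=1$, which unwinds to
\begin{align*}
\frac{k^k}{(k-1)!}\ln(1+P) \;>\; \Big(\sum_{j=0}^{k-1}\frac{k^j}{j!}\Big)\frac{P}{1+P}.
\end{align*}
Both sides vanish at $P=0$, so I would close the argument via a $P$-derivative comparison, which reduces to $(1+P)\tfrac{k^k}{(k-1)!} > \sum_{j=0}^{k-1}\tfrac{k^j}{j!}$; this holds because the sum has $k$ terms each bounded by $\tfrac{k^{k-1}}{(k-1)!} = \tfrac{1}{k}\cdot\tfrac{k^k}{(k-1)!}$. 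Combined with $g_k(0)=0$ and positivity of $g_k$ on $(0,\infty)$, this places the maximizer strictly in $(0,1)$.

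With $s^o\in(0,1)$ secured, the remaining ingredient is that on $s\in(0,1]$ the map $k\mapsto\Pr\{G_k\geq ks\}$ is non-decreasing; this is the $s\le 1$ specialization of the Jorswieck-Boche outage comparison and, heuristically, follows from $\mathbb E[G_k/k]=1$ (the empirical mean of more samples concentrates above any $s<1$ more strongly). Combining this monotonicity with $s^o<1$ yields $g_k(s^o)\le g_{n_t}(s^o)$ for every $k\le n_t$, so the global optimum is achieved at $k=n_t$, i.e.\ $\mathbf Q^o=\tfrac{P}{n_t}\mathbf I_{n_t}$, and substituting into $g_{n_t}$ produces \cref{miso-sl-rate-max}.

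The hardest part will be the uniform-in-$k$ verification of the inequality at $s=1$: both $\tfrac{k^k}{(k-1)!}$ and $\sum_{j<k}k^j/j!$ grow as $\Theta(\sqrt{k}\,e^{k})$ by Stirling, so the gap driving the argument comes entirely from the $P$-dependent slack of $\ln(1+P)$ over $P/(1+P)$. The derivative reduction above isolates exactly this slack and bypasses the combinatorial growth; the boundary case $k=1$ is the tight one, where the derivative comparison becomes equality at $P=0$ but is strict for all $P>0$, which is enough.
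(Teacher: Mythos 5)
Your proposal is correct and mirrors the paper's proof: both reduce to $k$-antenna equal-power allocation via Jorswieck--Boche, show the optimal normalized threshold satisfies $0<s^o<1$, and then invoke Jorswieck--Boche a second time (valid precisely when $s<1$, i.e.\ $P>e^R-1$) to conclude that all $n_t$ antennas are used. The only difference is how ${\mathcal R}_s^{\prime}(s)<0$ for $s\geq 1$ is verified --- you use the increasing-failure-rate property of the Erlang hazard together with a $P$-derivative comparison at $s=1$, whereas the paper directly bounds the reciprocal hazard $r(s)\leq 1$ for $s\geq 1$ and shows $g(s,P)>s$ --- and both verifications are sound.
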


\begin{proof}

As pointed out above, we can restrict our attention to assume that $l_t$ out of $n_t$ transmit antennas are active and perform equal power allocation. \Cref{miso_single1} is simplified to

\begin{align} \label{miso_single_Rayleigh3}
   \mathcal R_{s}^m &= \max_{R,l_t}  \Pr\left\{\ln \left( 1+\frac{P}{l_t} \sum_{\ell=1}^{l_t} |h_\ell|^2 \right) \geq R \right\}R  \nonumber \\
  &=\max_{s,l_t}  \Pr\left\{ \sum_{\ell=1}^{l_t} |h_\ell|^2  \geq l_t s \right\}R \nonumber \\
  &=  \max_{s,l_t} \overline F_{\mathrm a} (l_t s) \ln \left( 1+P s  \right),
\end{align}
where $a \Def \sum_{\ell=1}^{l_t} | h_\ell|^2 $ is gamma-distributed and thereby, $\overline F_{\mathrm a}(x)=\frac{\Gamma (l_t, x) }{\Gamma(l_t)}$. The first derivative of $\mathcal R_s (s)=\overline F_{\mathrm a} (l_t s) \ln \left( 1+P s  \right)$ with respect to $s$ is 
\begin{align}
{\mathcal R}_s^{\prime}(s) = \overline F_{\mathrm a} (l_t s) \frac{P}{1+P s} - l_t f_{\mathrm a}(l_t s) \ln \left( 1+P s  \right).
\end{align}
Let us define the following functions,
\begin{align}
&r(s) \Def \frac{\overline F_{\mathrm a}(l_t s)}{l_t f_{\mathrm a}(l_t s)}, \\
&g(s,P) \Def \ln \left( 1+P s \right)^{\frac{1+P s }{P}}. 
\end{align}
As such, we get
\begin{align} \label{diff_compair_trg}
\left\{
\begin{array}{lcr}
{\mathcal R}_s^{\prime}(s) > 0  & \text{iff} & r(s)>g(s,P), \\
{\mathcal R}_s^{\prime}(s) = 0  & \text{iff} & r(s)=g(s,P), \\
{\mathcal R}_s^{\prime}(s) < 0  & \text{iff} & r(s)<g(s,P). 
\end{array}
\right. 
\end{align}
Noting $\overline F_{\mathrm a}(x)=\frac{\Gamma (l_t, x) }{\Gamma(l_t)}$ and $f_{\mathrm a}(x)=\frac{x^{l_t-1}e^{-x}}{\Gamma(l_t)}$, we have
\begin{align} \label{lhs_diff_miso_single_Rayleigh1}
r(s) = \frac{\Gamma (l_t, l_t s)}{l_t (l_t s)^{l_t-1}e^{-l_t s}}=\frac{\Gamma (l_t, l_t s)}{l_t^{l_t} s^{l_t-1}e^{-l_t s}}.
\end{align}
For positive integer arguments of $m$, $\Gamma (m,x)=(m-1)!e^{-x}\sum_{\ell=0}^{m-1} \frac{x^\ell}{\ell!}$. Inserting the above equation into \cref{lhs_diff_miso_single_Rayleigh1} yields
\begin{align} \label{lhs_diff_miso_single_Rayleigh2}
r(s) &= \frac{(l_t-1)!e^{-l_t s}\sum_{\ell=0}^{l_t-1} \frac{(l_t s)^\ell}{\ell!}}{l_t (l_t s)^{l_t-1}e^{-l_t s}} \nonumber \\
&=\frac{1}{l_t}+ \frac{1}{l_t} \sum_{\ell=0}^{l_t-2} \frac{(l_t-1)\dots(\ell+1)}{(l_t s)^{l_t-\ell-1}} \nonumber \\
&=
\frac{1}{l_t}+\frac{1}{l_t} \sum_{\ell=0}^{l_t-2} \prod_{k=0}^{l_t-\ell-2} \frac{l_t-k-1}{l_t s}.
\end{align}
As $\frac{l_t-k-1}{l_t s} < 1$ for $s \geq 1$, replacing in \cref{lhs_diff_miso_single_Rayleigh2} gives
\begin{align} \label{lhs_diff_miso_single_Rayleigh3}
r(s) \leq \frac{1}{l_t}+\frac{1}{l_t} \sum_{\ell=0}^{l_t-2} \prod_{k=0}^{l_t-\ell-2} 1 =
\frac{1}{l_t}+\frac{l_t-1}{l_t}= 1,~~ \forall s \geq 1.
\end{align}
From \cref{lhs_diff_miso_single_Rayleigh2}, $\lim_{s\to 0} r(s) = +\infty$.

On the other hand, the first derivative of $g\left(s\right)$ with respect to $P$ is 
\begin{align}
\frac{ \partial g(s,P)}{\partial P} &= \frac{s P-\ln\left( 1+ s P  \right)}{P^2} \nonumber \\
&= \frac{1}{P^2}\ln \frac{e^{s P}}{1+s P} \nonumber \\
&=\frac{1}{P^2}\ln \left( 1+\frac{1}{1+s P} \sum_{k=2}^{\infty} \frac{\left(s P \right)^k}{k!} \right) > 0.
\end{align}
Therefore, $g(s,P)$ is a strictly increasing function with respect to $P$. As a result,
\begin{align}  \label{lhs_diff_miso_single_Rayleigh5}
 g(s,P) > \lim_{P \to 0} \ln \left( 1+ P s\right)^{\frac{1+P s}{P}} = s.
\end{align}
Comparing \cref{lhs_diff_miso_single_Rayleigh3}, \cref{lhs_diff_miso_single_Rayleigh5}, $\lim_{s\to 0} r(s) = +\infty$, and $g(0,P)=0$, we get
\begin{align} \label{lhs_diff_miso_single_Rayleigh6}
\left\{
\begin{array}{lc}
r(s)>g(s,P) & s=0, \\
r(s)<g(s,P) & s \geq 1.
\end{array}
\right. 
\end{align}
Inserting \cref{lhs_diff_miso_single_Rayleigh6} into \cref{diff_compair_trg} yields
\begin{align} \label{diff_miso_single_Rayleigh1}
\left\{
\begin{array}{lc}
{\mathcal R}_s^{\prime}(s)>0 & s=0, \\
{\mathcal R}_s^{\prime}(s)<0 & s \geq 1.
\end{array}
\right. 
\end{align}

Since $\mathcal R_s(s)$ is a continuous function, according to \cref{diff_miso_single_Rayleigh1}, for all positive integer values of $l_t$ and positive values of $P$, one can conclude that $\mathcal R_s(s)$ takes its maximum at $0 < s^o < 1$.

Jorswieck and Boche \cite{jorswieck} proved that when $P > e^R-1$, or equivalently $s < 1$, the optimum transmission strategy to minimize the outage probability is to use all available antennas with equal power allocation. Since $\forall l_t$, $0 < s^o <1$, the optimum strategy maximizing the throughput is to use all available antennas and perform equal power allocation. 
The maximum throughput is given by \cref{miso-sl-rate-max}.

\end{proof}

\begin{remark}
In point-to-point single-input single-output (SISO) channels, by substituting $n_t=1$ in \cref{miso-sl-rate-max}, the optimum solution with respect to $s$ is $s^o=\frac{1}{\mathcal W_0\left(P\right)}-\frac{1}{P}$, where $\mathcal W_0 \left( \cdot \right)$ is the zero branch of the Lambert W-function. Therefore, 
\begin{align}
\mathcal R_{s}^m = e^{\frac{1}{P}-\frac{1}{\mathcal W_0\left(P\right)}} \ln \left(\frac{P}{\mathcal W_0\left(P\right)}  \right).
\end{align}
From \cref{ergodic-capacity-miso-pro}, the ergodic capacity in this channel is 
\begin{align}
C_{\emph{erg}} = e^{\frac{1}{P}} \emph{E}_1\left( \frac{1}{P} \right).
\end{align}
\end{remark}

\begin{remark} \label{remark-misi-throughput-asymptote}
Note that $g\left(s,P\right)$ is a strictly increasing function with respect to $s$ and $P$, and $r\left(s\right)$ is a strictly decreasing function with respect to $s$ and increases with the number of transmit antennas. Therefore, the solution to $r\left(s\right)=g\left(s,P\right)$, i.e., $s^o$, 
\begin{itemize}
\item decreases with $P$. In asymptotically high SNR regime, $s^o \to 0$.
\item increases with $n_t$. In asymptotically large number of transmit antennas, $s^o \to 1$.
\end{itemize}
\end{remark}

As a byproduct result of \cref{miso-single-theorem,remark-misi-throughput-asymptote}, we have the following.

\begin{corollary}
In the asymptotically large number of transmit antennas MISO channel, the maximum throughput is given by
\begin{align}
\mathcal R_s^m = \lim_{s\to 1} \frac{\Gamma\left( n_t,n_t s \right)}{\left( n_t-1 \right)!} \ln\left( 1+P s \right) \stackrel{n_t\to \infty}{\longrightarrow} \ln\left( 1+P \right).
\end{align}
\end{corollary}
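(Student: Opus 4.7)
The plan is to establish $\lim_{n_t\to\infty}\mathcal R_s^m = \ln(1+P)$ via a two-sided sandwich. \Cref{miso-single-theorem} reduces the problem to analyzing the one-parameter family $\frac{\Gamma(n_t,n_t s)}{(n_t-1)!}\ln(1+Ps)$ over $s\in(0,1)$, and \cref{remark-misi-throughput-asymptote} guarantees that the optimizer $s^o$ tends to $1$ as $n_t\to\infty$, so intuitively the $\ln$ factor saturates at $\ln(1+P)$; the real content is checking that the probability prefactor does not collapse.

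For the upper bound, note that $\frac{\Gamma(n_t,n_t s)}{(n_t-1)!}=\Pr\{a\geq n_t s\}\leq 1$, where $a=\sum_{\ell=1}^{n_t}|h_\ell|^2$ is gamma-distributed with shape $n_t$ and unit scale, and $\ln(1+Ps)<\ln(1+P)$ for $s<1$. Hence $\mathcal R_s^m < \ln(1+P)$ at every finite $n_t$. For the lower bound I would fix any $s'\in(0,1)$ and apply Chebyshev's inequality to $a$, which has mean $n_t$ and variance $n_t$:
$$\Pr\{a<n_t s'\}\leq\frac{n_t}{n_t^2(1-s')^2}=\frac{1}{n_t(1-s')^2}\longrightarrow 0.$$
Thus $\frac{\Gamma(n_t,n_t s')}{(n_t-1)!}\to 1$, and since $\mathcal R_s^m$ dominates the throughput at any admissible choice $s'$,
$$\mathcal R_s^m \;\geq\; \frac{\Gamma(n_t,n_t s')}{(n_t-1)!}\ln(1+Ps') \;\longrightarrow\; \ln(1+Ps'),\qquad n_t\to\infty.$$
Sending $s'\uparrow 1$ \emph{after} the $n_t$ limit yields $\liminf_{n_t\to\infty}\mathcal R_s^m\geq\ln(1+P)$, completing the sandwich.

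The only subtlety — and the main point worth flagging — is the order of limits. One cannot substitute $s=1$ inside $\frac{\Gamma(n_t,n_t s)}{(n_t-1)!}$ first and then let $n_t\to\infty$: a central-limit argument for the gamma distribution gives $\frac{\Gamma(n_t,n_t)}{(n_t-1)!}\to\tfrac{1}{2}$, which would spuriously halve the answer. The lower-bound step sidesteps this by freezing $s'$ strictly below $1$ so that the Chebyshev estimate drives the prefactor to $1$ (channel-hardening), and only then pushing $s'\uparrow 1$. An alternative, equivalent upper bound — should one prefer — would be $\mathcal R_s^m\leq C_{\text{erg}}$ from \cref{Markov-throughput-ergodic-lemma}, together with $C_{\text{erg}}\to\ln(1+P)$ via dominated convergence applied to $\ln\bigl(1+\tfrac{P}{n_t}\sum_\ell|h_\ell|^2\bigr)$; but the trivial bound above is simpler.
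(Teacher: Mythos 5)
Your proof is correct, and it is both more careful and more rigorous than the paper's own treatment. The paper presents this corollary as an immediate ``byproduct'' of \cref{miso-single-theorem} and \cref{remark-misi-throughput-asymptote}: since $s^o\to 1$ as $n_t\to\infty$, it simply substitutes $s\to 1$ into the throughput expression and then sends $n_t\to\infty$, offering no further argument. Your sandwich --- trivial upper bound from $\overline F_{\mathrm a}\leq 1$ and $\ln(1+Ps)<\ln(1+P)$, plus a Chebyshev (channel-hardening) lower bound at a frozen $s'<1$ followed by $s'\uparrow 1$ \emph{after} the $n_t$ limit --- is a genuinely different and complete route that does not even need the monotonicity claim of \cref{remark-misi-throughput-asymptote}. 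More importantly, your flagged subtlety is a real catch: the paper's displayed chain, read literally, evaluates $\lim_{s\to 1}$ first, and since
\begin{align*}
\frac{\Gamma(n_t,n_t)}{(n_t-1)!}=\Pr\{a\geq n_t\}\xrightarrow{n_t\to\infty}\tfrac{1}{2}
\end{align*}
by the central limit theorem for the $\mathrm{Gamma}(n_t,1)$ variable $a$, the middle expression actually tends to $\tfrac{1}{2}\ln(1+P)$, not $\ln(1+P)$. The final conclusion $\mathcal R_s^m\to\ln(1+P)$ is nonetheless true, precisely because $s^o$ approaches $1$ from below at a rate for which the prefactor still hardens to $1$ --- which is exactly what your fixed-$s'$ argument captures and what the paper's notation obscures. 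The one mild dependence in your argument worth noting is that the upper bound still leans on \cref{miso-single-theorem} (or, as you note, on \cref{Markov-throughput-ergodic-lemma} plus Jensen for $C_{\text{erg}}\leq\ln(1+P)$) to cap the maximization at $\ln(1+P)$; either route closes cleanly.
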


%

\begin{remark} \label{remark-correlation-misi-single}
In a correlated MISO channel wherein the transmitter does neither know the CSI nor the channel correlation, the outage probability is a Schur-convex (resp. Schur-concave) function of the channel covariance matrix for $P>e^R-1$ (resp. $P<\frac{e^R-1}{2}$) \cite{jorswieck}.
According to \cref{miso-single-theorem}, in the maximum throughput of the MISO channel, i.e., $\overline{\mathcal P}_{out}(R^o)R^o$, we have $e^{R^o}-1<P$. Hence, in this range of the transmission rate, $\mathcal R_s$ is a Schur-concave function of the channel covariance matrix, i.e., channel correlation decreases the throughput. In terms of the impact of correlation in the MISO channel with no CSI at the transmitter, the behavior of the maximum throughput is similar to the behavior of the ergodic capacity which is also a Schur-concave function of the channel covariance matrix \cite{boche2004ergodic}.
\end{remark}


\section{Maximum Expeted-Rate in MISO Channels} \label{miso-multi-section}

A block fading channel can be modeled by an equivalent broadcast channel whose receiver channels represent any fading coefficient realization. The expected-rate of a fading channel is equal to a weighted sum-rate of its equivalent broadcast channel in which the weights distribution is the complementary CDF (tail distribution) of the channel gain \cite{shamai1997broadcast}. 
In broadcast channels, any maximum weighted sum-rate with positive value weights is on the capacity region \cite{reza}. 
Since superposition (multi-layer) coding achieves the capacity region of degraded broadcast channels \cite{thomas2006}, it is the optimum coding strategy to maximize the average achievable rate in any block fading channel whose equivalent broadcast channel is degraded \cite{shamai}. An example for such channels is the SISO channel. Although multi-layer coding is not the optimum coding strategy in MISO channels, it increases the average achievable rate of the channel. Numerical results for the continuous-layer expected-rate of MISO and SIMO block Rayleigh fading channels were presented in \cite{steiner2007multi}. Here, the optimum transmit covariance matrix at each code layer is obtained, and consequently, the maximum expected-rate of the MISO channel is analytically formulated. Note that the maximum expected-rate of the SIMO channel can be calculated using the same formula by replacing $P$ with $n_t P$ in \cref{miso-integral-limits-cl}.

In order to enhance the lucidity of this section, we divide it into two subsections. 
\Cref{miso-multi_finite-subsection} presents the maximum expected-rate of the MISO channel when a finite-layer code is transmitted. 
The more code layers, the higher expected-rate. Hence, a continuous-layer (infinite-layer) code yields the highest expected-rate of the channel. 
The maximum continuous-layer expected-rate of the MISO channel is derived in \cref{miso-multi_infinite-subsection} in closed form. 

\subsection{Finite-Layer Code} \label{miso-multi_finite-subsection}

In finite-layer coding approach, the transmitter sends a $K$-layer code $\vec X=\sum_{i=1}^K \vec X_i$. Let $P_i$ be the signal power in the $i$'th layer with rate $R_{i}=\ln \left( 1+\frac{P_i s_i}{1+ I_i s_i} \right)$, where $I_i=\sum_{j=i+1}^K P_j$ is the power of the upper layers while decoding the $i$'th layer. 
The maximum expected-rate in \cref{finite-layer-expected-rate-definition-formula} is simplified to
\begin{align} \label{miso_multii1}
\mathcal R_{f}^m {=} \!\! \max_{\begin{subarray}{c}
R_i,P_i,\mathbf{Q}_i \\
\text{tr}(\mathbf{Q}_i)\leq P_i \\
\sum_{i=1}^K P_i = P
\end{subarray}} 
\sum_{i=1}^K
\Pr\! \left\{\! \ln \left(\!1\!+\!\frac{\vec h \mathbf{Q}_i \vec h^{\dag}}{\vec h \sum_{j=i+1}^K \mathbf{Q}_j \vec h^{\dag}} \!\right) \!\geq R_i  \!\right\}\!R_i.
\end{align}

\Cref{miso-multi-theorem} presents the optimum covariance matrix in each layer which maximizes the expected-rate in the MISO channel.

\begin{theorem} \label{miso-multi-theorem}
In a finite-layer $n_t \times 1$ MISO block Rayleigh fading channel, the optimum transmit covariance matrix in each layer which maximizes the expected-rate is $\mathbf Q_i^o=\frac{P_i}{n_t} \mathbf I_{n_t}$, where $P_i$ is the power allocated to the $i$'th layer. The maximum $K$-layer expected-rate is given by
\begin{align} \label{miso_avg_max_ml}
\mathcal R_{f}^m {=}\! \max_{\begin{subarray}{c}
0 < s_i < 1, P_i \\
\sum_{i=1}^K P_i = P
\end{subarray} }
\sum_{i=1}^K \frac{\Gamma \left( n_t,n_t s_i\right)}{(n_t-1)!} \ln \!\left(\! 1\!+\!\frac{P_i s_i}{1+\sum_{j=i+1}^K P_j s_i}\! \right).
\end{align}
\end{theorem}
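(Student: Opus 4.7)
The plan is to adapt the proof of \cref{miso-single-theorem} to the multi-layer setting via a backward induction on the layer index, leveraging the structure of the interference seen at each layer. For the $i$-th layer, the non-outage event is
\begin{align*}
\{\mathcal I_i \geq R_i\} = \{\vec h \mathbf Q_i \vec h^\dag \geq \gamma_i(1+\vec h \mathbf B_i \vec h^\dag)\},
\end{align*}
where $\gamma_i \Def e^{R_i}-1$ and $\mathbf B_i \Def \sum_{j=i+1}^K \mathbf Q_j$ is the covariance of the upper layers treated as interference.

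First I would handle the top layer $i=K$. Since $\mathbf B_K = \mathbf 0$, the contribution $\Pr\{\mathcal I_K \geq R_K\} R_K$ has exactly the single-layer form analyzed in \cref{miso-single-theorem} with $P_K$ in place of $P$, so direct invocation of that theorem gives $\mathbf Q_K^o = (P_K/n_t)\mathbf I_{n_t}$ with $s_K^o \in (0,1)$. Proceeding by backward induction, suppose $\mathbf Q_j^o = (P_j/n_t)\mathbf I_{n_t}$ for every $j>i$; then $\mathbf B_i = (I_i/n_t)\mathbf I_{n_t}$ and the layer-$i$ non-outage event becomes $\vec h \mathbf M_i \vec h^\dag \geq \gamma_i$, with $\mathbf M_i \Def \mathbf Q_i - (\gamma_i I_i/n_t)\mathbf I_{n_t}$. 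Since $\mathbf M_i$ differs from $\mathbf Q_i$ only by an isotropic shift, the distribution of $\vec h \mathbf M_i \vec h^\dag$ depends on $\mathbf Q_i$ only through its eigenvalues, and the rotational invariance of $\vec h \sim [\mathcal{CN}(0,1)]_{n_t}$ lets me restrict attention to diagonal $\mathbf Q_i$. I would then mimic the proof of \cref{miso-single-theorem}: define an analogous hazard-type ratio $r_i(s_i)$, compare it against the appropriate $g(s_i,\cdot)$-type expression to show the optimizer satisfies $s_i^o \in (0,1)$, and invoke the Jorswieck--Boche outage-minimization result (applicable in this range of $\gamma_i$) to conclude $\mathbf Q_i^o = (P_i/n_t)\mathbf I_{n_t}$.

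With equal power on every layer, let $a \Def \|\vec h\|^2$, whose complementary CDF is $\overline F_{\mathrm a}(x) = \Gamma(n_t,x)/(n_t-1)!$. The non-outage event for layer $i$ then reduces to $a \geq n_t s_i$, where $s_i$ solves $\gamma_i = P_i s_i/(1+I_i s_i)$; hence $\Pr\{\mathcal I_i \geq R_i\} = \Gamma(n_t, n_t s_i)/(n_t-1)!$, and summing over $i$ and maximizing over the admissible $(s_i, P_i)$'s yields \cref{miso_avg_max_ml}.

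The main obstacle is proving that the joint optimum really does have the equal-power structure on every upper layer at each step of the induction, rather than merely being a coordinate-wise fixed point of an alternating optimization. The cleanest way I see to close this gap is to exploit the invariance of the objective in \cref{miso_multii1} under the common unitary rotation $\mathbf Q_i \mapsto \mathbf U \mathbf Q_i \mathbf U^\dag$ applied simultaneously to all layers. Since the only covariance matrices invariant under arbitrary unitary conjugation are scalar multiples of the identity, combining this symmetry with the per-layer Schur-concavity that underpins the Jorswieck--Boche result forces the joint optimum to take the asserted scaled-identity form on every layer.
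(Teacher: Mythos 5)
Your backward induction, the isotropic-interference simplification, and the final computation giving $\Pr\{\mathcal I_i \geq R_i\} = \Gamma(n_t, n_t s_i)/(n_t-1)!$ are consistent with the paper's conclusion, and your skeleton (diagonalize by rotational invariance, show $s_i^o\in(0,1)$ by an $r$-versus-$g$ comparison, then invoke Jorswieck--Boche) closely tracks the paper's proof of \cref{miso-multi-theorem}. The paper avoids induction: after citing the directional invariance of the outage probability to diagonalize all layers at once, it writes the layer-$i$ non-outage event as a single weighted sum $\sum_{\ell}\left( \delta_\ell+s_i I_i \delta_\ell -s_i I_i \eta_\ell \right) |h_\ell|^2 \geq s_i$ whose coefficients sum to one, and argues the coefficients must be equal or zero; the rest is the same single-variable analysis you describe, with a modified $g(s_i,P_i,I_i)$ accounting for interference.

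The gap you flag --- that the induction only certifies a coordinate-wise (alternating-optimization) fixed point --- is genuine, and the patch you propose does not close it. Invariance of the objective under the simultaneous conjugation $\mathbf Q_i\mapsto\mathbf U\mathbf Q_i\mathbf U^\dag$ implies only that the \emph{set} of maximizers is invariant under that group action; it does not force any individual maximizer to be a fixed point of the action. To draw that conclusion you would need either uniqueness of the maximizer or joint concavity of the objective in $(\mathbf Q_1,\dots,\mathbf Q_K)$, so that Haar-averaging over the orbit (which projects each $\mathbf Q_i$ onto $\frac{\text{tr}(\mathbf Q_i)}{n_t}\mathbf I_{n_t}$) cannot decrease the value. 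Neither is established, and concavity fails in general: for a fixed rate at low power the outage probability is minimized by activating a strict subset of antennas (the Jorswieck--Boche phenomenon), which is exactly a non-isotropic optimum coexisting with its entire unitary orbit of equally good points. Moreover, the per-layer Schur-concavity you invoke is itself conditional on $s_i<1$ and, even granted, it only controls how layer $i$'s term depends on $\mathbf Q_i$; it says nothing about the cross-layer trade-off in which perturbing $\mathbf Q_j$ for $j>i$ away from a scaled identity degrades layer $j$'s own term but alters the interference seen by all layers below it. Ruling out that trade-off is precisely what joint optimality requires, and neither the induction nor the symmetry argument does so.
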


\begin{proof}
Since the outage probability does not depend on the directions of the transmit covariance matrix $\mathbf Q$ \cite{visotsky}, the problem is diagonalized. 
Therefore, the expected-rate received at the destination is simplified to
\begin{align} \label{miso_multi1}
\mathcal R_{f} = \sum_{i=1}^K \Pr \left\{ \ln \left(1+\frac{P_i \sum_{\ell=1}^{n_t} \delta_\ell |h_\ell|^2}{1+I_i \sum_{\ell=1}^{n_t} \eta_\ell |h_\ell|^2} \right) \geq R_i  \right\}R_i,
\end{align}
where $\delta_\ell$ and $\eta_\ell$ are the power fraction and upper-layer interference portion at the $\ell$'th antenna, respectively, subject to $\sum_{\ell=1}^{n_t} \delta_\ell = \sum_{\ell=1}^{n_t} \eta_\ell = 1$. \Cref{miso_multi1} can be rewritten as 
\begin{align} \label{miso_multi_proof1}
\mathcal R_{f} = 
\sum_{i=1}^K \Pr \left\{\sum_{\ell=1}^{n_t} \left( \delta_\ell+s_i I_i \delta_\ell -s_i I_i \eta_\ell \right) |h_\ell|^2 \geq s_i \right\}R_i.
\end{align}
As $\sum_{\ell=1}^{n_t} \left( \delta_\ell+s_i I_i \delta_\ell -s_i I_i \eta_\ell \right) = 1$, to minimize $\Pr \left\{\sum_{\ell=1}^{n_t} \left( \delta_\ell+s_i I_i \delta_\ell -s_i I_i \eta_\ell \right) |h_\ell|^2 < s_i \right\},~\forall i$, the optimum value of $\delta_\ell+s_i I_i \delta_\ell -s_i I_i \eta_\ell$ must be either zero or a constant independent of $\ell$ for any positive value of $s_i$.
Hence, up to now, the optimum solution to \cref{miso_multi_proof1} is to choose either $\delta_\ell=\eta_\ell=\frac{1}{l_{t_i}}$ or $\delta_\ell=\eta_\ell=0$, that is to use $l_{t_i}$ out of $n_t$ antennas with power $\frac{P_i}{l_{t_i}}$ in each layer. Therefore, \cref{miso_multi_proof1} is simplified to
\begin{align} \label{miso_multi1_2}
\mathcal R_{f} = \sum_{i=1}^K \Pr \left\{\sum_{\ell=1}^{l_{t_i}}  |h_\ell|^2 \geq l_{t_i} s_i  \right\}R_i=
\sum_{i=1}^K \overline F_{\mathrm a_i} \left( l_{t_i} s_i \right) R_i,
\end{align}
where $a_i=\sum_{\ell=1}^{l_{t_i}}  |h_\ell|^2$.
In the remainder of the proof, we shall show that the optimum solution with respect to $l_{t_i}$ is $l_{t_i}^o=n_t,~\forall i$.
Analogous to the throughput case in \cref{miso-single-theorem}, let us define 
\begin{align}
&\mathcal R_s(s_i) \Def \overline F_{\mathrm a_i} \left( l_{t_i} s_i \right) \ln \left( 1+\frac{P_i s_{i}}{1+I_i s_i} \right), \\
&r(s_i) \Def \frac{\overline F_{\mathrm a_i}(l_{t_i} s_i)}{l_{t_i} f_{\mathrm a_i}(l_{t_i} s_i)}, \\
&g(s_i,P_i,I_i) \Def \frac{\left(1+I_i s_i \right)\left(1+\left( I_i+P_i\right) s_i \right)}{P_i} \ln \left( 1+\frac{P_i s_{i}}{1+I_i s_i} \right). 
\end{align}
Note that $g(0,P_i,I_i)=0$, $\lim_{s_i \to 0} r(s_i) = +\infty$, and \cref{diff_compair_trg,lhs_diff_miso_single_Rayleigh3} still hold by redefining $\mathcal R_s(s_i)$, $r(s_i)$, and $g(s_i,P_i,I_i)$ as above, and with $s$ replaced by $s_i$. 

Defining $\hat P_i \Def \frac{P_i}{1+I_i s_i}$, from \cref{lhs_diff_miso_single_Rayleigh5} and noting $I_i s_i \geq 0$, we have
\begin{align}
g(s_i,P_i,I_i) &= \left( 1+I_i s_i \right) \frac{\left(1+\frac{P_i s_i
}{1+I_i s_i} \right)}{\frac{P_i}{1+I_i s_i}} \ln \left( 1+\frac{P_i s_i
}{1+I_i s_i} \right) \nonumber \\
& {\geq} 
\ln \left( 1+\hat P_i s_i \right)^{\frac{\left(1+\hat P_i s_i \right)}{\hat P_i}}
{>} s_i,~~~ \forall s_i \geq 1.
\end{align}

Therefore, \cref{lhs_diff_miso_single_Rayleigh6,diff_miso_single_Rayleigh1} still hold with the above functions, and lead to $0 < s_i^o < 1$. 
This directly corresponds to the proof of \cref{miso-single-theorem} and shows that the optimum power allocation strategy is to use all available antennas with equal power allocation in each layer, i.e., $\mathbf Q_i^o=\frac{P_i}{n_t} \mathbf I_{n_t}$, and the maximum expected-rate is given by \cref{miso_avg_max_ml}. 

\end{proof}


%


\subsection{Continuous-Layer Code} \label{miso-multi_infinite-subsection}

In the continuous-layer coding, a.k.a.\ broadcast approach, a continuum of code layers is transmitted.
Similar to finite-layer coding in \cref{miso-multi_finite-subsection}, the receiver decodes the signal from the lowest layer up to the layer that the channel condition allows. 

\Cref{miso-infinite-proposition} yields a closed form expression for the maximum continuous-layer expected-rate in the MISO channel by optimizing the power distribution over the layers.

\begin{proposition} \label{miso-infinite-proposition}
In the MISO block Rayleigh fading channel, the maximum continuous-layer expected-rate obtained by optimizing the power distribution over the layers is given by 
\begin{align} \label{indefinite-rate-miso-cl}
\mathcal R_{c}^m = \mathcal R(s_1) - \mathcal R(s_0),
\end{align}
where,
\begin{align} \label{indefinite-integral-miso-cl}
\mathcal R(s) = e^{-s}  \sum_{\ell = 1}^{n_t-1} \frac{1}{\ell !} \left( s^{\ell} - (n_t+ 1-\ell)(\ell -1)! \sum_{k = 0}^{\ell-1} \frac{s^k}{k!}  \right) \nonumber \\
+e^{-s}-(n_t+1) \emph{E}_1 (s).
\end{align}
$s_0$ and $s_1$ are the solutions to 
\begin{align} \label{miso-integral-limits-cl}
\begin{cases}
\sum_{\ell=0}^{n_t-1} \frac{(n_t-1)!}{\ell ! s_0^{n_t-\ell}} = 1+\frac{P}{n_t} s_0, \\
\sum_{\ell=0}^{n_t-1} \frac{(n_t-1)!}{\ell ! s_1^{n_t-\ell}} = 1,
\end{cases} 
\end{align}
respectively.
\end{proposition}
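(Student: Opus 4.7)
The plan is to pass from the finite-layer formula of \cref{miso-multi-theorem} to the continuum limit and then solve a single-variable calculus-of-variations problem for the optimal power distribution across layers. By \cref{miso-multi-theorem}, every layer can be assumed to transmit with covariance $\frac{P_i}{n_t}\mathbf I_{n_t}$, so the relevant channel statistic is $a\Def\sum_{\ell=1}^{n_t}|h_\ell|^2$, which is gamma-distributed with tail $\overline F_{\mathrm a}(s)=\Gamma(n_t,s)/(n_t-1)!=e^{-s}\sum_{k=0}^{n_t-1}s^k/k!$ and density $f_{\mathrm a}(s)=s^{n_t-1}e^{-s}/(n_t-1)!$. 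Parameterizing the continuum of layers by the threshold $s$ on $a$, letting $I(s)$ denote the cumulative power of layers above $s$, and sending $K\to\infty$ in \cref{miso_avg_max_ml} recasts the expected-rate as the functional
\begin{align}
\mathcal R_c = \int_{s_0}^{s_1}\overline F_{\mathrm a}(s)\,\frac{-s\,I'(s)}{n_t+s\,I(s)}\,\ud s,
\end{align}
to be maximized over $I(\cdot)\in[0,P]$, with $I(s_0)=P$ and $I(s_1)=0$ imposed by the total-power constraint.

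Next I would apply the Euler--Lagrange equation to the Lagrangian $L(s,I,I')=\overline F_{\mathrm a}(s)\frac{-sI'}{n_t+sI}$. Although $L$ is linear in $I'$, the $I'$-dependent pieces in $\frac{\partial L}{\partial I}-\frac{\ud}{\ud s}\frac{\partial L}{\partial I'}=0$ cancel and the stationarity condition becomes purely algebraic,
\begin{align}
n_t+s\,I^o(s) = \frac{n_t\,\overline F_{\mathrm a}(s)}{s\,f_{\mathrm a}(s)}.
\end{align}
Substituting the explicit forms of $\overline F_{\mathrm a}$ and $f_{\mathrm a}$ collapses the right-hand side to $n_t\sum_{\ell=0}^{n_t-1}\frac{(n_t-1)!}{\ell!\,s^{n_t-\ell}}$, i.e.\ $1+\tfrac{s}{n_t}I^o(s)=\sum_{\ell=0}^{n_t-1}\frac{(n_t-1)!}{\ell!\,s^{n_t-\ell}}$. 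Because this unconstrained $I^o$ diverges to $+\infty$ as $s\downarrow 0$ and eventually turns negative as $s$ grows, the Shamai--Steiner truncation to the feasible range $I\in[0,P]$ identifies the two endpoints: $I^o(s_0)=P$ reproduces the first line of \cref{miso-integral-limits-cl}, and $I^o(s_1)=0$ the second.

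The remaining step is to evaluate the rate functional at the optimum $I^o$ in closed form. Splitting $\frac{-sI^{o\prime}}{n_t+sI^o}=-\frac{\ud}{\ud s}\ln(n_t+sI^o)+\frac{I^o}{n_t+sI^o}$ and exploiting the stationarity relation, the gamma-density identity $\frac{\ud}{\ud s}\ln(sf_{\mathrm a}(s))=\frac{n_t}{s}-1$, and $\overline F_{\mathrm a}(s)\frac{I^o}{n_t+sI^o}=\frac{\overline F_{\mathrm a}(s)}{s}-f_{\mathrm a}(s)$, the integrand condenses to $\overline F_{\mathrm a}(s)\bigl[\frac{n_t+1}{s}-1\bigr]$. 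What is left is to compute the antiderivative
\begin{align}
\mathcal R(s) = (n_t+1)\int\frac{\overline F_{\mathrm a}(s)}{s}\,\ud s - \int\overline F_{\mathrm a}(s)\,\ud s
\end{align}
term-by-term, using $\int e^{-s}/s\,\ud s=-\text E_1(s)$ together with the repeated-integration-by-parts identity $\int s^{k-1}e^{-s}\,\ud s=-(k-1)!\,e^{-s}\sum_{j=0}^{k-1}s^j/j!$, and then regrouping to recognize the compact form \cref{indefinite-integral-miso-cl}.

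The main obstacle is the bookkeeping in this final step: the two antiderivatives produce on the order of $n_t^2$ monomial-times-$e^{-s}$ terms that must be reorganized into the single double sum of \cref{indefinite-integral-miso-cl}, where the coefficient $(n_t+1-\ell)(\ell-1)!$ arises from consolidating the harmonic-like factor $1/k$ produced by the $\overline F_{\mathrm a}/s$ integral against the pure $\overline F_{\mathrm a}$ integral. The variational derivation of $I^o$ and the identification of $s_0,s_1$ are routine once the functional is set up, and the Shamai--Steiner argument justifying truncation of $I^o$ to $[0,P]$ is standard but should be stated explicitly for completeness.
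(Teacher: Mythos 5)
Your proposal is correct and follows essentially the same route as the paper: reduce to equal power allocation via \cref{miso-multi-theorem}, set up the Shamai--Steiner continuous-layer functional, solve the Euler--Lagrange equation to get $1+\tfrac{s}{n_t}I^o(s)=\overline F_{\mathrm a}(s)/(sf_{\mathrm a}(s))$ with the truncation at $I\in[0,P]$ fixing $s_0,s_1$, and then integrate $\overline F_{\mathrm a}(s)\bigl(\tfrac{n_t+1}{s}-1\bigr)$ by parts to reach \cref{indefinite-integral-miso-cl}. Your normalization ($n_t+sI$ with total power $P$) is equivalent to the paper's ($1+sI$ with per-antenna power $P/n_t$), and your explicit condensation of the integrand supplies detail the paper leaves to the cited Euler-equation step.
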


\begin{proof}
Based on \cref{miso-multi-theorem}, transmitting each of the code layers on all available antennas and performing equal power allocation is optimum. 
As showed in \cite{shamai}, the maximum continuous-layer expected-rate of fading channels with general distribution is given by 
\begin{align} \label{infinite-layer-general-me}
\mathcal R_{c}^m = \max_{I(s)} \int_{0}^{\infty} \overline F_{\mathrm a}(s) \frac{-s I^{\prime}(s)}{1+sI(s)} \ud s.
\end{align}
Noting $\overline F_{\mathrm a}(s)=\frac{\Gamma\left( n_t, s \right)}{\Gamma\left( n_t \right)}=e^{-s} \sum_{\ell=0}^{n_t-1} \frac{s^{\ell}}{\ell!}$, we have
\begin{align} \label{infinite-layer-general-shamai}
\mathcal R_{c}^m = \max_{I(s)} \int_{0}^{\infty} \frac{-se^{- s} I^{\prime}(s)}{1+sI(s)} \sum_{\ell=0}^{n_t-1} \frac{s^\ell}{\ell!} \ud s.
\end{align}
The optimization solution to \cref{infinite-layer-general-shamai} with respect to $I(s)$ under the total power constraint $\frac{P}{n_t}$ at each antenna is found using variation methods \cite{calculus}. By solving the corresponding E$\ddot{\text{u}}$ler equation \cite{calculus}, we come up with the final solution as follows,
\begin{align} \label{infinite-layer-miso-integral}
\mathcal R_{c}^m = \int_{s_0}^{s_1}e^{-s} \left( \frac{n_t+1}{s} -1  \right) \sum_{\ell =0}^{n_t-1} \frac{s^\ell}{\ell !} \ud s,
\end{align}
where boundaries $s_0$ and $s_1$ are the solutions to $\sum_{\ell=0}^{n_t-1} \frac{(n_t-1)!}{\ell ! s_0^{n_t-\ell}} = 1+\frac{P}{n_t} s_0$ and $\sum_{\ell=0}^{n_t-1} \frac{(n_t-1)!}{\ell ! s_1^{n_t-\ell}} = 1$, respectively.
The indefinite integral (antiderivative) of \cref{infinite-layer-miso-integral} is given by \cref{indefinite-integral-miso-cl} (the derivation steps are deferred to appendix \ref{proof-expected-rate-integral}). Applying the integration limits completes the proof. 

\end{proof}

\begin{remark}
By substituting $n_t=1$ in \cref{miso-infinite-proposition}, the maximum continuous-layer expected-rate of the SISO channel is
\begin{align} \label{r_av_shamai_Rayleigh} 
\mathcal R_{c}^m = 2\emph{E}_1\left(\frac{2}{1+\sqrt{1+4P}}\right)-2\emph{E}_1(1)- e^{\frac{-2}{1+\sqrt{1+4P}}}+e^{-1}.
\end{align}
As pointed out earlier, one can model a point-to-point block Rayleigh fading channel with an equivalent broadcast channel. 
According to the degradedness of the equivalent SISO broadcast channel, and the optimality of superposition (multi-layer) coding for such channels \cite{thomas2006}, the maximum continuous-layer expected-rate of the SISO channel, i.e., \cref{r_av_shamai_Rayleigh}, represents its maximum average achievable rate \cite{shamai}.
\end{remark}

\begin{remark}
Since the equivalent broadcast channel of the MISO channel is not degraded, its maximum continuous-layer expected-rate is not the maximum average achievable rate of the channel. For example, in asymptotically low SNR regime, the multiple-access scheme provides a higher average achievable rate in the MISO channel. In the multiple-access scheme, the antennas send independent messages, and the receiver decodes as much as it can.
\end{remark}

\begin{remark}
Similar to \cref{remark-correlation-misi-single}, one can conclude that for $0 < s_i^o < 1,~\forall i$, the maximum expected-rate of the MISO channel with uninformed transmitter is a Schur-concave function of the channel covariance matrix, that is channel correlation reduces the maximum expected-rate.
\end{remark}

\section{Maximum Throughput in MIMO Channels} \label{mimo-section}

The throughput maximization problem in the MIMO channel is less tractable than that corresponding to the MISO channel.

Since in the Gaussian MIMO channel, in the sense of the outage probability, the optimum eigenvectors of the transmit covariance matrix always correspond to the eigenvectors of the channel correlation matrix \cite{visotsky}, one can restrict the transmit covariance matrix to be diagonal in the problem of interest. 

Recall from \cref{problem-setup}, in an $n_t \times n_r$ MIMO channel, the PDF of the instantaneous mutual information in \cref{instantaneous-mutual-information-mimo-general} does not lend itself to a closed form expression.
In order to analyze the throughput, it is necessary to characterize this PDF. There are some approximations for the PDF of the instantaneous mutual information in literature, e.g., approximations on the distribution of the eigenvalues of $\mathbf H \mathbf H^\dag$ in MIMO channels with asymptotically large number of antennas at both the transmitter and receiver sides \cite{silverstein1995empirical,chuah2002capacity}.

In a MIMO channel with $\mathbf Q = \frac{P}{n_t} \mathbf I_{n_t}$, the PDF of the instantaneous mutual information can be well approximated by the Gaussian distribution with the same mean and variance \cite{wang,hochwald}, i.e.,
\begin{align}
\mathcal I  \sim \mathcal N \left( \mu(n_t,n_r) , \sigma^2(n_t,n_r)   \right),
\end{align}
where
\begin{align}
\begin{cases}
\mu(n_t,n_r) = \mathbb E \left( \mathcal I \right), \\
\sigma^2(n_t,n_r) = \text{Var}\left(\mathcal I\right).
\end{cases}
\end{align}
Note that $\mu(n_t,n_r)$ equals the ergodic capacity of an $n_t \times n_r$ MIMO channel, which is a strictly increasing function with respect to $n_t$ and $n_r$ \cite{telatar}.
This Gaussian distribution approximation allows the throughput maximization to be expressed as
\begin{align} \label{mimo-max-throu-gaussian-first}
\mathcal R_{s}^m &= \max_R \Pr \left\{ \mathcal I \geq R  \right\} R \nonumber \\
&= \max_R \mathcal Q\left( \frac{R-\mu(n_t,n_r)}{\sigma(n_t,n_r)} \right)R.
\end{align}
With $z = \frac{R-\mu(n_t,n_r)}{\sigma(n_t,n_r)}$, \cref{mimo-max-throu-gaussian-first} leads to
\begin{align} 
\mathcal R_{s}^m &= \max_z \mathcal Q(z) \left( \sigma(n_t,n_r) z + \mu(n_t,n_r) \right) \label{throughput_gaussian_approx} \\
&= \mathcal Q(z^o) \left( \sigma(n_t,n_r) z^o + \mu(n_t,n_r) \right), \label{throughput_gaussian_approx_a}
\end{align}
where $z^o$ is the solution to
\begin{align} \label{large_M_zo_eq} 
-\frac{1}{\sqrt{2\pi}} e^{-\frac{z^{o^2}}{2}}\! \left( \sigma(n_t,\!n_r) z^o\! +\! \mu(n_t,\!n_r) \right)\! +\! \sigma(n_t,\!n_r) \mathcal Q(z^o) {=} 0.
\end{align}


Since the existing approximations for the PDF of the instantaneous mutual information in the MIMO channel are not tractable enough to analyze the maximum throughput in general case, four asymptotic cases are investigated. In all four cases, it is shown that the optimum transmit strategy is to 
use 
all available antennas.
It seems reasonable to conjecture that the above statement holds with the general MIMO channel.
To test the claim, 
\cref{mimo-single-general-fig} shows the maximum throughput in a MIMO channel with 10 receive antennas.
Note that the number of transmit antennas varies from $1$ to $20$ and the total power $P$ sweeps the range of -$10$ dB to $50$ dB.

\begin{figure}
\centering
\includegraphics[scale=0.7]{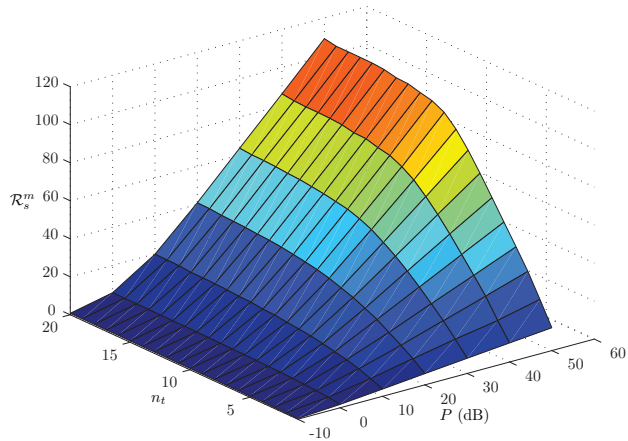}
\caption{The maximum throughput (in \emph{nats}) in a MIMO channel with 10 receive antennas ($n_r=10$).}
\label{mimo-single-general-fig}
\end{figure}


\subsection{Asymptotically Low SNR Regime}

For small SNR values, the eigenvalues of $\mathbf Q \mathbf H^\dag \mathbf H$ are small enough to approximate the following,

\begin{align}
\prod_{\ell=1}^{n_t} \left(1+\text{eig}_\ell \left( \mathbf Q \mathbf H^\dag \mathbf H \right) \right) \approx 1+\sum_{\ell=1}^{n_t} \text{eig}_\ell \left( \mathbf Q \mathbf H^\dag \mathbf H \right).
\end{align}
Therefore, the instantaneous mutual information of \cref{instantaneous-mutual-information-mimo-general} can be approximated by
\begin{align} \label{mimo_lsnr_mutual_approx}
\mathcal I &= \ln \det \left ( \mathbf I_{n_t}+ \mathbf Q \mathbf H^\dag \mathbf H \right) \nonumber \\
&=\ln \prod_{\ell=1}^{n_t} \left(1+\text{eig}_\ell \left( \mathbf Q \mathbf H^\dag \mathbf H \right) \right) \nonumber \\
&\approx \ln \left(1+\sum_{\ell=1}^{n_t} \text{eig}_\ell \left( \mathbf Q \mathbf H^\dag \mathbf H \right) \right).
\end{align}
Using \cref{mimo_lsnr_mutual_approx}, we can prove the following proposition on the optimum transmit covariance matrix which maximizes the throughput in the asymptotically low SNR regime MIMO channel.
%
%

\begin{proposition} \label{mimo-lsnr_single-theorem}
The optimum transmit strategy maximizing the throughput in the asymptotically low SNR regime MIMO channel is transmitting independent signals and performing equal power allocation across all available antennas. The maximum throughput is 
\begin{align} \label{mimo_lsnr_thr_1}
\mathcal R_{s}^m = \max_{0 < s < n_r} \frac{\Gamma \left( n_t n_r, n_t s \right)}{(n_t n_r-1)!} \ln \left( 1+P s \right).
\end{align}
\end{proposition}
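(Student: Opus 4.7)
The plan is to leverage the low-SNR approximation in \cref{mimo_lsnr_mutual_approx} to collapse the MIMO throughput problem onto the same template as \cref{miso-single-theorem}. Because $\mathcal I\approx\ln\bigl(1+\text{tr}(\mathbf Q\mathbf H^\dag\mathbf H)\bigr)$, the outage event depends on the transmit covariance matrix only through its trace against $\mathbf H^\dag\mathbf H$. Diagonalizing $\mathbf Q$ via the Visotsky--Madhow argument already invoked at the start of \cref{mimo-section}, and then applying the same Jorswieck--Boche/Schur-concavity reasoning used in the proof of \cref{miso-single-theorem}, reduces the feasible set to covariance matrices with $l_t$ active antennas carrying equal power $P/l_t$.

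Under this restriction, $\text{tr}(\mathbf Q\mathbf H^\dag\mathbf H)=\frac{P}{l_t}\sum_{\ell=1}^{l_t}\sum_{k=1}^{n_r}|h_{k,\ell}|^2=\frac{P}{l_t}B_{l_t}$, and $B_{l_t}$ is gamma-distributed with shape $l_t n_r$ and unit rate because the $|h_{k,\ell}|^2$ are i.i.d.\ unit-mean exponentials. With the substitution $R=\ln(1+Ps)$, the throughput becomes
\begin{align*}
\mathcal R_s(s,l_t)=\frac{\Gamma(l_t n_r,\,l_t s)}{(l_t n_r-1)!}\,\ln(1+Ps),
\end{align*}
which matches the MISO throughput in \cref{miso_single_Rayleigh3} with the gamma shape upgraded from $l_t$ to $l_t n_r$.

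Next, I reproduce the derivative analysis of \cref{miso-single-theorem} verbatim. Defining $r(s)\Def\overline F_{B_{l_t}}(l_t s)/[l_t f_{B_{l_t}}(l_t s)]$ and $g(s,P)\Def\ln(1+Ps)^{(1+Ps)/P}$, the integer-gamma expansion yields
\begin{align*}
r(s)=\frac{1}{l_t}+\frac{1}{l_t}\sum_{\ell=0}^{l_t n_r-2}\prod_{k=0}^{l_t n_r-\ell-2}\frac{l_t n_r-k-1}{l_t s}.
\end{align*}
For $s\geq n_r$ every factor in every product is bounded by $1$, hence $r(s)\leq\frac{1}{l_t}+\frac{l_t n_r-1}{l_t}=n_r$. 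Combined with the already-established inequality $g(s,P)>s$ from \cref{lhs_diff_miso_single_Rayleigh5}, this forces $r(s)<g(s,P)$ on $[n_r,\infty)$, so $\mathcal R_s'(s)<0$ there, and $\lim_{s\to 0}r(s)=\infty$ pins the optimum in $(0,n_r)$ for every admissible $l_t$.

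Finally, since $\ln(1+Ps)$ does not depend on $l_t$, the outer optimization over $l_t$ reduces, for each fixed $s\in(0,n_r)$, to maximizing $\Pr\{B_{l_t}/l_t\geq s\}$. As $B_{l_t}/l_t$ is the sample mean of $l_t n_r$ i.i.d.\ unit-mean exponentials, it concentrates around $n_r$ with $l_t$, so its tail above any $s<n_r$ is increasing in $l_t$; this pins down $l_t^o=n_t$ and yields \cref{mimo_lsnr_thr_1}. The main technical obstacle is rigorously justifying this monotonicity of $l_t\mapsto\Gamma(l_t n_r,l_t s)/(l_t n_r-1)!$ for $s<n_r$; a clean route is either a stochastic-dominance argument comparing gamma laws with a common mean, or treating $l_t$ as a continuous parameter and computing the sign of the derivative using standard properties of the digamma function.
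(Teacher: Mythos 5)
Your reduction to the gamma form and your derivative analysis pinning $s^o\in(0,n_r)$ are both correct and faithful to the template of \cref{miso-single-theorem} (the bound $r(s)\leq n_r$ for $s\geq n_r$, combined with $g(s,P)>s$, is a valid adaptation of \cref{lhs_diff_miso_single_Rayleigh3,lhs_diff_miso_single_Rayleigh5}). The genuine gap is the very last step: you never actually prove that $l_t\mapsto\frac{\Gamma(l_t n_r,\,l_t s)}{(l_t n_r-1)!}$ is increasing for $s<n_r$, and the concentration heuristic you offer in its place is not a proof. Concentration of the sample mean $B_{l_t}/l_t$ around $n_r$ only tells you the tail probability tends to $1$ as $l_t\to\infty$; it does not give monotonicity between consecutive finite values of $l_t$, and tail probabilities of sample means below the mean are not monotone in the sample size for general distributions. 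Since the entire claim of the proposition is precisely that $l_t^o=n_t$, leaving this step as an acknowledged "technical obstacle" leaves the proposition unproved.

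The paper closes this step without any new monotonicity lemma by a one-line change of variables: setting $\hat s=s/n_r$ turns the objective into $\frac{\Gamma(l_t n_r,\,l_t n_r\hat s)}{(l_t n_r-1)!}\ln(1+Pn_r\hat s)$, which is \emph{exactly} the MISO throughput \cref{miso_single_Rayleigh3} of an $(n_t n_r)\times 1$ channel with total power $Pn_r$ in which $l_t n_r$ of the $n_t n_r$ virtual antennas are active. Because the optimizer satisfies $0<\hat s^o<1$ for every $l_t$ (your own bound, rescaled), the Jorswieck--Boche result already invoked in \cref{miso-single-theorem} applies and says that activating all $n_t n_r$ virtual antennas dominates any fraction, hence $l_t^o=n_t$. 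I recommend replacing your final paragraph with this rescaling argument; alternatively, if you insist on a self-contained monotonicity proof, note that comparing shapes $l_t n_r$ and $(l_t+1)n_r$ at thresholds $l_t s$ and $(l_t+1)s$ is precisely the comparison your own \cref{diff_compair_trg}-style machinery was built to avoid, so the stochastic-dominance route would require a separate (nontrivial) lemma about gamma laws with common mean.
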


\begin{proof}
Let $\delta_\ell P$ denote the allocated power to the $\ell$'th antenna subject to $\sum_{\ell=1}^{n_t} \delta_{\ell} = 1$. 
From \cref{mimo_lsnr_mutual_approx}, the instantaneous mutual information for low SNR values can be expressed as,
\begin{align} \label{mimo_lsnr_mutual}
\mathcal I & \approx \ln \left(1+\sum_{\ell=1}^{n_t} \text{eig}_\ell \left( \mathbf Q \mathbf H^\dag \mathbf H \right) \right) \nonumber \\
&= \ln \left(1+ \text{tr}\left( \mathbf Q \mathbf H^\dag \mathbf H \right) \right) \nonumber \\
&= \ln \left( 1+P \sum_{\ell=1}^{n_t} \sum_{k=1}^{n_r} \delta_\ell \left| h_{\ell,k} \right|^2  \right).
\end{align}

\Cref{mimo_lsnr_mutual} corresponds to the instantaneous mutual information in the MISO channel. Therefore, the optimum transmit strategy minimizing the outage probability in the asymptotically low SNR regime MIMO channel is to transmit independent signals and perform equal power allocation across a fraction of available antennas.

Assume that the transmitter has allocated equal power to $l_t$ out of $n_t$ transmit antennas. the maximum throughput is given by
\begin{align} \label{mimo_lsnr_thr_20}
\mathcal R_{s}^m = \max_{s} \frac{\Gamma \left( l_t n_r, l_t s \right)}{(l_t n_r-1)!} \ln \left( 1+P s \right)
.
\end{align}
With $\hat s = \frac{s}{n_r}$, \cref{mimo_lsnr_thr_20} leads to
\begin{align} \label{mimo_lsnr_thr_2}
\mathcal R_{s}^m 
= \max_{\hat s} \frac{\Gamma \left( l_t n_r, l_t n_r \hat s \right)}{(l_t n_r-1)!} \ln \left( 1+P n_r \hat s \right).
\end{align}
\Cref{mimo_lsnr_thr_2} corresponds to the maximum throughput expression of the MISO channel, i.e., \cref{miso_single_Rayleigh3}, with $l_t n_r$ transmit antennas and total power $P n_r$. According to \cref{miso-single-theorem}, the optimum transmit strategy is to use all available antennas and $0 <\hat s< 1$, and equivalently $0 < s < n_r$.
%

\end{proof}

In the same direction, the finite-layer expected-rate is given by \cref{mimo-lsnr-multi-theorem}.

\begin{corollary} \label{mimo-lsnr-multi-theorem}
The optimum transmit strategy maximizing the $K$-layer expected-rate of the asymptotically low SNR regime MIMO channel is transmitting independent signals and performing equal power allocation across all available antennas in each code layer. The maximum throughput is 
\begin{align} \label{mimo_lsnr_thr_multi-1}
\mathcal R_{f}^m {=} \!\max_{\begin{subarray}{c}
0 < s_i < n_r, P_i \\
\sum_{i=1}^K P_i = P
\end{subarray} }
\sum_{i=1}^K \frac{\Gamma \left( n_t n_r, n_t s \right)}{(n_t n_r-1)!} \!\ln \!\left(\! 1\!+\!\frac{P_i s_i}{1\!+\!\sum_{j=i+1}^K P_j s_i} \!\right)\!.
\end{align}
\end{corollary}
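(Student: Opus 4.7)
The plan is to combine the low-SNR MIMO-to-MISO reduction of \Cref{mimo-lsnr_single-theorem} with the per-layer optimization machinery already built in \Cref{miso-multi-theorem}. Because the corollary is the multi-layer analogue of \Cref{mimo-lsnr_single-theorem}, the proof should mirror that proposition layer by layer, with the ergodic channel gain $a = \sum_{\ell,k} |h_{\ell,k}|^2$ playing the role of $\sum_\ell |h_\ell|^2$ in the MISO proof.

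First, I would write the instantaneous mutual information seen by layer $i$ under successive decoding as a difference of log-determinants, apply \eqref{mimo_lsnr_mutual_approx} to both terms, and assemble them into the SINR-like form
$$
\mathcal I_i \approx \ln\!\left(1+\frac{\mathrm{tr}(\mathbf Q_i\mathbf H^\dag\mathbf H)}{1+\mathrm{tr}\bigl(\sum_{j>i}\mathbf Q_j\mathbf H^\dag\mathbf H\bigr)}\right).
$$
Diagonalizing $\mathbf Q_i$ and $\sum_{j>i}\mathbf Q_j$ (using that the outage-optimal covariance is diagonal \cite{visotsky}), and introducing per-layer power fractions $\delta_{i,\ell}$ and upper-layer interference fractions $\eta_{i,\ell}$, this becomes
$$
\mathcal I_i \approx \ln\!\left(1+\frac{P_i\sum_{\ell=1}^{n_t}\sum_{k=1}^{n_r}\delta_{i,\ell}|h_{\ell,k}|^2}{1+I_i\sum_{\ell=1}^{n_t}\sum_{k=1}^{n_r}\eta_{i,\ell}|h_{\ell,k}|^2}\right),
$$
with $I_i=\sum_{j>i}P_j$, $\sum_\ell\delta_{i,\ell}=\sum_\ell\eta_{i,\ell}=1$. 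This matches exactly the structure analysed in the proof of \Cref{miso-multi-theorem}, except that each $|h_\ell|^2$ is replaced by $\sum_{k=1}^{n_r}|h_{\ell,k}|^2$, i.e.\ a sum of $n_r$ i.i.d.\ unit-mean exponentials.

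Second, I would reuse the Schur-type argument of \Cref{miso-multi-theorem}: because $\sum_\ell(\delta_{i,\ell}+s_iI_i\delta_{i,\ell}-s_iI_i\eta_{i,\ell})=1$, making the coefficient of each $|h_{\ell,k}|^2$ either zero or a common constant independent of $\ell$ minimises the per-layer outage event. This forces $\delta_{i,\ell}=\eta_{i,\ell}\in\{1/l_{t_i},0\}$, so layer $i$ uses $l_{t_i}$ antennas with equal power. The relevant channel gain is then $a_i=\sum_{\ell=1}^{l_{t_i}}\sum_{k=1}^{n_r}|h_{\ell,k}|^2\sim\mathrm{Gamma}(l_{t_i}n_r,1)$, so $\overline F_{\mathrm a_i}(l_{t_i}s_i)=\Gamma(l_{t_i}n_r,l_{t_i}s_i)/(l_{t_i}n_r-1)!$, and the per-layer objective becomes
$$
\frac{\Gamma(l_{t_i}n_r,l_{t_i}s_i)}{(l_{t_i}n_r-1)!}\ln\!\left(1+\frac{P_is_i}{1+I_is_i}\right).
$$

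Third, I would rescale $\hat s_i=s_i/n_r$ as in the proof of \Cref{mimo-lsnr_single-theorem}, which maps this expression onto the MISO multi-layer objective of \Cref{miso-multi-theorem} with $l_{t_i}n_r$ effective transmit antennas and an effective per-layer power $P_in_r$. The derivative comparison between $r(\hat s_i)$ and $g(\hat s_i,\cdot,\cdot)$ carried out there then yields $0<\hat s_i^o<1$, equivalently $0<s_i^o<n_r$, and by the monotonicity of the per-layer outage probability in the number of active antennas (exactly as at the end of the MISO proof), $l_{t_i}^o=n_t$ for every $i$. Substituting $\mathbf Q_i^o=(P_i/n_t)\mathbf I_{n_t}$ back into \eqref{finite-layer-expected-rate-definition-formula} gives \eqref{mimo_lsnr_thr_multi-1}.

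The main obstacle I anticipate is making the low-SNR approximation consistently on \emph{both} the numerator $\ln\det(\mathbf I+\sum_{j\ge i}\mathbf Q_j\mathbf H^\dag\mathbf H)$ and the denominator $\ln\det(\mathbf I+\sum_{j>i}\mathbf Q_j\mathbf H^\dag\mathbf H)$ so that their difference collapses to the clean log-SINR form used above: a careless use of $\ln\det(\mathbf I+\mathbf A)\approx\mathrm{tr}(\mathbf A)$ separately in each term would leave a first-order remainder that destroys the reduction to \Cref{miso-multi-theorem}. Once the per-layer log-SINR structure is established, the rest is a direct transcription of the MISO multi-layer proof with the $n_r$-fold exponential aggregation and the $s\mapsto n_r\hat s$ rescaling.
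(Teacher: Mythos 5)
Your proposal is correct and follows essentially the same route as the paper: the paper likewise writes the $i$'th layer mutual information in the SINR form $\mathcal I_i \approx \ln\bigl(1+\frac{P_i\sum_{\ell,k}\delta_\ell|h_{\ell,k}|^2}{1+I_i\sum_{\ell,k}\eta_\ell|h_{\ell,k}|^2}\bigr)$ and then invokes the proofs of \Cref{miso-multi-theorem} and \Cref{mimo-lsnr_single-theorem}, exactly the reduction, Schur-type equal-power argument, and $s_i\mapsto n_r\hat s_i$ rescaling you carry out. Your additional care about applying the low-SNR approximation consistently in numerator and denominator is a detail the paper glosses over, but it does not change the argument.
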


\begin{proof}
At the $i$'th layer, let $\delta_\ell P_i$ and $\eta_\ell I_i$ denote the allocated power and upper-layers power at the $\ell$'th antenna subject to $\sum_{\ell=1}^{n_t} \delta_{\ell} = \sum_{\ell=1}^{n_t} \eta_{\ell} = 1$, and $I_i = \sum_{j=i+1}^{K} P_j$. 
Following the same steps in \cref{mimo_lsnr_mutual}, the $i$'th layer instantaneous mutual information can be approximated by
\begin{align} \label{mimo_lsnr_mutual_ml}
\mathcal I_i \approx \ln \left(1+\frac{P_i \sum_{\ell=1}^{n_t} \sum_{k=1}^{n_r} \delta_\ell \left| h_{\ell,k} \right|^2}{1+I_i \sum_{\ell=1}^{n_t} \sum_{k=1}^{n_r} \eta_\ell \left| h_{\ell,k} \right|^2} \right).
\end{align}
\Cref{mimo_lsnr_mutual_ml} corresponds to the instantaneous mutual information of the multi-layer MISO channel in \cref{miso-multi_finite-subsection}. The proof is completed by following the steps in the proof of \cref{miso-multi-theorem} and \cref{mimo-lsnr_single-theorem}.
\end{proof}

Corresponding to \cref{miso-infinite-proposition}, we have the following corollary for continuous-layer coding in the low SNR MIMO channels.
\begin{corollary}
The maximum continuous-layer expected-rate in the asymptotically low SNR regime MIMO channel is given by 
\begin{align}
\mathcal R_{c}^m = \mathcal R(s_1) - \mathcal R(s_0),
\end{align}
where,
\begin{align} \label{indefinite-integral-mimo-lsnr-cl}
\mathcal R(s) {=} e^{-s} \! \sum_{\ell = 1}^{n_t n_r-1} \frac{1}{\ell !}\! \left(\! s^{\ell} \!-\! (n_t n_r\!+\! 1\!-\!\ell)(\ell \!-\!1)! \sum_{k = 0}^{\ell-1} \frac{s^k}{k!}  \!\right) \nonumber \\
+e^{-s}\!-\!(n_t n_r\!+\!1) \emph{E}_1 (s).
\end{align}
$s_0$ and $s_1$ are the solutions to 
\begin{align}
\begin{cases}
\sum_{\ell=0}^{n_t n_r-1} \frac{(n_t n_r-1)!}{\ell ! s_0^{n_t n_r-\ell}} = 1+\frac{P}{n_t} s_0, \\
\sum_{\ell=0}^{n_t n_r-1} \frac{(n_t n_r-1)!}{\ell ! s_1^{n_t n_r-\ell}} = 1,
\end{cases} 
\end{align}
respectively.
\end{corollary}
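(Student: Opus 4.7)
The plan is to apply the same calculus-of-variations argument used in \cref{miso-infinite-proposition}, observing that in the asymptotically low SNR MIMO regime the layered problem collapses to a single effective scalar gamma-distributed channel whose shape parameter is $n_t n_r$ in place of $n_t$. First, I would invoke \cref{mimo-lsnr-multi-theorem} to conclude that the layer-wise optimum transmit covariance is $\mathbf{Q}_i^o = (P_i/n_t)\mathbf{I}_{n_t}$; substituting this into the low-SNR expansion \cref{mimo_lsnr_mutual_ml}, the $i$-th layer instantaneous mutual information becomes
\begin{align*}
\mathcal{I}_i \approx \ln\!\left(1 + \frac{(P_i/n_t)\, \mathrm{a}}{1 + (I_i/n_t)\, \mathrm{a}}\right),\quad \mathrm{a} \Def \sum_{\ell=1}^{n_t}\sum_{k=1}^{n_r} |h_{\ell,k}|^2.
\end{align*}
Because $\mathrm{a}$ is the sum of $n_t n_r$ i.i.d.\ unit-mean exponentials, it is $\Gamma(n_t n_r,1)$-distributed, so $\overline{F}_{\mathrm{a}}(s) = e^{-s}\sum_{\ell=0}^{n_t n_r - 1} s^\ell/\ell!$, which is exactly the MISO CCDF of \cref{miso-infinite-proposition} with $n_t$ promoted to $n_t n_r$.

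Taking the continuous-layer limit and feeding this CCDF into the broadcast-approach functional \cref{infinite-layer-general-me}, I would obtain
\begin{align*}
\mathcal{R}_c^m = \max_{I(s)} \int_0^\infty \frac{-s\, e^{-s} I^{\prime}(s)}{1+sI(s)} \sum_{\ell=0}^{n_t n_r - 1} \frac{s^\ell}{\ell!}\, \ud s,
\end{align*}
subject to the per-antenna power constraint $I(0) = P/n_t$, $I(\infty) = 0$ inherited from $\mathbf{Q}_i^o$. Solving the corresponding Euler equation via the same variation method used in \cref{miso-infinite-proposition} produces a closed-form optimal layering $I^o(s)$ supported on $[s_0,s_1]$. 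The upper endpoint $s_1$ is where the unconstrained stationary $I^o$ hits zero (beyond this value no additional power is profitable), while $s_0$ is where the unconstrained $I^o$ equals $P/n_t$ (below this value the total-power constraint has been exhausted). Translating these two boundary conditions through the explicit form of $I^o$ yields precisely the two transcendental equations for $s_0$ and $s_1$ in the corollary statement; the factor $P/n_t$ (and not $P/(n_t n_r)$) survives because the power budget lives on the $n_t$ physical transmit antennas.

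Finally, substituting $I^o(s)$ back into the functional and simplifying reduces the maximum expected rate to $\int_{s_0}^{s_1} e^{-s}\!\left((n_t n_r + 1)/s - 1\right)\sum_{\ell=0}^{n_t n_r - 1} s^\ell/\ell!\,\ud s$, the MIMO analogue of \cref{infinite-layer-miso-integral}. The antiderivative follows by iterated integration by parts, with the $1/s$ term generating the $\emph{E}_1(s)$ contribution and the remaining polynomial terms producing the double sum in the claimed $\mathcal{R}(s)$. This is exactly the calculation deferred to appendix \ref{proof-expected-rate-integral} with $n_t$ replaced by $n_t n_r$, so it goes through verbatim. The main obstacle --- and the only point I would verify explicitly --- is the bookkeeping of the boundary conditions to confirm that the per-antenna budget $P/n_t$ (rather than a rescaled $P/(n_t n_r)$ arising from naively treating the problem as $n_t n_r$ parallel scalar channels) is indeed what enters the equation for $s_0$. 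Once verified, evaluating the antiderivative at $s_0$ and $s_1$ yields $\mathcal{R}_c^m = \mathcal{R}(s_1) - \mathcal{R}(s_0)$ as claimed.
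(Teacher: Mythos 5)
Your proposal is correct and follows essentially the argument the paper intends (the paper states this as an immediate corollary of \cref{miso-infinite-proposition} via the low-SNR equivalence to an $(n_t n_r)\times 1$ MISO channel with total power $n_r P$, i.e., per-antenna power $P/n_t$). You correctly identify and resolve the one genuine subtlety — that the boundary condition for $s_0$ carries $P/n_t$ rather than $P/(n_t n_r)$ — and the rest (gamma CCDF with shape $n_t n_r$, Euler equation, and the antiderivative of appendix \ref{proof-expected-rate-integral} with $n_t$ replaced by $n_t n_r$) matches the paper's derivation.
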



\begin{remark}
Analogous to the MISO channel, in the asymptotically low SNR regime MIMO channel with uninformed transmitter, 
channel correlation decreases the maximum throughput and maximum expected-rate.
\end{remark}


\subsection{Asymptotically High SNR Regime}

For large SNR values, we take advantages of Wishart distribution properties.
In order to enhance the lucidity of this section, let us define $p \Def \min \left\{ n_t , n_r \right\}$, $n \Def \max \left\{ n_t , n_r \right\}$, and
\begin{align}
\mathbf W = 
\begin{cases}
\mathbf H^{\dag} \mathbf H & n_t \leq n_r, \\
\mathbf H \mathbf H^{\dag} & n_t > n_r.
\end{cases}
\end{align}
Matrix $\mathbf W$ has a central complex $p$-variate Wishart distribution with scale matrix $\mathbf I_{n_t}$ and $n$ degrees of freedom \cite{anderson2003,muirhead1982aspects,ratnarajah2005complex}.

\Cref{mimo_hsnr_thr} 
yields the maximum throughput in the asymptotically high SNR regime MIMO channel by obtaining the optimum transmit covariance matrix $\mathbf Q^o$.

\begin{theorem} \label{mimo_hsnr_thr}
The optimum transmit strategy maximizing the throughput in the asymptotically high SNR regime MIMO channel is sending independent signals and performing equal power allocation across all available antennas. The maximum throughput is 
\begin{align} \label{mimo_hsnr_single_rate}
\mathcal R_{s}^m &= \max_{s}
\overline F_{\mathrm a}\left( \frac{n_{t}^{p} s}{P^{p-1}} \right) \ln \left( 1+P s \right) \\
&=\max_z \mathcal Q(z)\! \left(\! z \sqrt{\frac{\pi^2}{6} p - \sum_{k=0}^{p-1} \sum_{\ell=1}^{n-k-1} \frac{1}{\ell^2}}  \right. \nonumber \\
&+  p\left( \digamma(1) +\ln \left( \frac{P}{n_t} \right) \right)+ \sum_{k=0}^{p-1} \sum_{\ell=1}^{n-k-1} \frac{1}{\ell} \!\Bigg),
\end{align}
where
$-\digamma(1) \approx 0.577215$ is the E$\ddot{\text{u}}$ler-Mascheroni constant, 
$a \Def \prod_{\ell=1}^{p} a_{\ell,\ell}^2$, and 
$a_{\ell,\ell}^2,\forall \ell$ are independent gamma-distributed with scale 1 and shape $n-\ell+1$, i.e.,
$f_{\mathrm a_{\ell,\ell}^2}(x)=\frac{\Gamma \left( n-\ell+1 , x\right)}{\left( n-\ell \right)!}$.
\end{theorem}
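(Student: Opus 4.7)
My plan is to combine the high-SNR expansion $\ln\det(\mathbf{I}+\mathbf{A})\approx\ln\det\mathbf{A}$ with the Bartlett decomposition of a complex Wishart matrix. First I would establish optimality of equal-power allocation, then exploit the Wishart structure to obtain the first (exact) form, and finally pass to the Gaussian approximation of \cref{throughput_gaussian_approx} to obtain the second form. Restricting to diagonal $\mathbf{Q}=\text{diag}(q_1,\ldots,q_{n_t})$ with $\sum_\ell q_\ell=P$, the high-SNR expansion gives, for $n_t\le n_r$,
\begin{align*}
\mathcal{I}\;\approx\;\sum_{\ell=1}^{n_t}\ln q_\ell\;+\;\ln\det(\mathbf{H}^\dag\mathbf{H}),
\end{align*}
so the law of $\mathcal{I}$ is a rigid shift (in $\sum_\ell\ln q_\ell$) of a $\mathbf{Q}$-independent distribution; by AM--GM, this sum is uniquely maximized at $q_\ell=P/n_t$, which pointwise maximizes $\Pr\{\mathcal{I}\ge R\}$ for every $R$ and hence the throughput.

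With $\mathbf{Q}^o=\frac{P}{n_t}\mathbf{I}_{n_t}$ in hand, the high-SNR approximation reads $\mathcal{I}\approx p\ln(P/n_t)+\ln\det\mathbf{W}$. The Bartlett decomposition $\mathbf{W}=\mathbf{A}\mathbf{A}^\dag$ with lower-triangular $\mathbf{A}$ gives $\det\mathbf{W}=\prod_{\ell=1}^p a_{\ell,\ell}^2=a$, with the $a_{\ell,\ell}^2$'s independent gamma as in the theorem statement. Choosing $R=\ln(1+Ps)$ and keeping the dominant $P$-terms, the outage event $\{\mathcal{I}<R\}$ collapses to $\{a<n_t^p s/P^{p-1}\}$, so $\Pr\{\mathcal{I}\ge R\}\approx \overline F_{\mathrm a}(n_t^p s/P^{p-1})$, and maximizing over $s$ produces the first displayed form.

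For the Gaussian form, I write $\ln a=\sum_{\ell=1}^p\ln a_{\ell,\ell}^2$ as a sum of independent summands, so the Gaussian approximation of \cref{throughput_gaussian_approx} applies with $\mu=p\ln(P/n_t)+\mathbb{E}\ln a$ and $\sigma^2=\text{Var}(\ln a)$. Using the standard log-chi-square identities $\mathbb{E}\ln a_{\ell,\ell}^2=\digamma(n-\ell+1)=\digamma(1)+\sum_{j=1}^{n-\ell}\tfrac{1}{j}$ and $\text{Var}(\ln a_{\ell,\ell}^2)=\tfrac{\pi^2}{6}-\sum_{j=1}^{n-\ell}\tfrac{1}{j^2}$, summation over $\ell$ (re-indexed by $k=\ell-1$) reproduces exactly the coefficient of $z$ and the additive constant inside $\mathcal{Q}(z)(\sigma z+\mu)$ in the theorem statement; substituting into \cref{throughput_gaussian_approx} then completes the derivation.

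The hard part will be the $n_t>n_r$ case of the optimality argument: unlike $n_t\le n_r$, here $\ln\det(\mathbf{H}\mathbf{Q}\mathbf{H}^\dag)$ does not decouple into a $\mathbf{Q}$-only and an $\mathbf{H}$-only term, so demonstrating that $\mathbf{Q}^o=\frac{P}{n_t}\mathbf{I}_{n_t}$ pointwise minimizes the outage probability requires a careful majorization / Schur-concavity argument on the nonzero eigenvalues of $\mathbf{H}\mathbf{Q}\mathbf{H}^\dag$ rather than a direct invocation of AM--GM, most likely lifting the MISO result of \cite{jorswieck} via the column-permutation invariance of $\mathbf{H}$.
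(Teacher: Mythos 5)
Your derivations of the two displayed formulas are sound and match the paper's: the Bartlett decomposition of $\mathbf W$ giving $\det\mathbf W=\prod_{\ell=1}^{p}a_{\ell,\ell}^2$ for the first form, and the Wishart identities $\mathbb E\left(\ln\det\mathbf W\right)=\sum_{k=0}^{p-1}\digamma(n-k)$ and $\text{Var}\left(\ln\det\mathbf W\right)=\sum_{k=0}^{p-1}\digamma^{\prime}(n-k)$ substituted into \cref{throughput_gaussian_approx} for the second. Your AM--GM step for full-support diagonal $\mathbf Q$ when $n_t\le n_r$ is also essentially the paper's observation that $\det\mathbf Q_{l_t}$ subject to the trace constraint is maximized by equal power.

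There are, however, two genuine gaps in the optimality argument. First, even for $n_t\le n_r$, your ``rigid shift'' argument only compares allocations with all $q_\ell>0$: when some $q_\ell=0$ the approximation $\ln\det(\mathbf I+\mathbf Q\mathbf H^\dag\mathbf H)\approx\ln\det(\mathbf Q\mathbf H^\dag\mathbf H)$ fails on the null directions and the relevant Wishart matrix changes dimension, so you must still compare $l_t\ln(P/l_t)+\ln\det\mathbf W_{l_t}$ across different numbers $l_t$ of active antennas. The paper closes this by noting that $\ln\det\mathbf W$ stochastically increases with the number of antennas and that $\frac{\partial}{\partial l_t}\bigl(l_t\ln(P/l_t)\bigr)=\ln(P/l_t)-1>0$ for $P>e\,l_t$. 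Second, you explicitly leave the $n_t>n_r$ case open, and the majorization route you sketch is harder than what is actually needed: the paper simply assumes equal power over the $l_t$ active antennas (so $\mathcal I\approx n_r\ln(P/l_t)+\ln\det\mathbf W$), invokes the Gaussian approximation \cref{throughput_gaussian_approx_a}, and shows the throughput is increasing in $l_t$ because $\mu(l_t,n_r)$ increases with $l_t$, $\sigma^2(l_t,n_r)=\frac{\pi^2}{6}n_r-\sum_{k=0}^{n_r-1}\sum_{\ell=1}^{l_t-k-1}\frac{1}{\ell^2}$ decreases with $l_t$, and $z^o<0$ (established via the Chernoff bound on the $\mathcal Q$-function). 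Without some argument of this kind, the optimality claim for $n_t>n_r$ remains unproven in your write-up.
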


\begin{proof}
Again, we first assume that $l_t$ out of $n_t$ antennas are active. Then, we shall see that the optimum solution is $l_t^o=n_t$.
Define the index set $Z\left(\mathbf Q \right)\Def \left\{ \ell:q_{\ell,\ell}=0 \right\}$. Denote by $\mathbf Q_{l_t}$ the matrix obtained from $\mathbf Q$ by eliminating of all the $\ell$'th rows and columns with $\ell \in Z\left(\mathbf Q \right)$.
Clearly, $\mathbf Q_{l_t}$ has full rank.
We divide the proof into two parts: Part i) $l_t \leq n_r$, Part ii) $l_t \geq n_r$. We wish to show that in both cases, the throughput is a strictly increasing function with respect to $l_t$.

\textbf{Part i)}:

In high SNR regime, the eigenvalues of $\mathbf Q_{l_t} \mathbf H^\dag \mathbf H$ are large. The instantanous mutual information can be well approximated by
\begin{align} \label{mimo_hsnr_mutual}
\mathcal I &= \ln \det \left ( \mathbf I_{l_t}+ \mathbf Q \mathbf H^\dag \mathbf H \right) \nonumber \\
&= \ln \prod_{\ell = 1}^{l_t} \left(1+ \text{eig}_{\ell} \left( \mathbf Q \mathbf H^{\dag} \mathbf H \right) \right) \nonumber \\
&\approx \ln \prod_{\ell = 1}^{l_t} \left( \text{eig}_{\ell} \left( \mathbf Q_{l_t} \mathbf H^{\dag} \mathbf H \right) \right)  \nonumber \\
&= \ln \det \left( \mathbf Q_{l_t} \mathbf H^{\dag} \mathbf H  \right) \nonumber \\
&= \ln \det \mathbf Q_{l_t} + \ln \det \left( \mathbf H^{\dag} \mathbf H  \right) \nonumber \\
&= \ln \det \mathbf Q_{l_t}  + \ln \det \mathbf W. 
\end{align}

Clearly, the CDF of $\ln \det \mathbf W$ decreases by the use of more antennas.
We shall now show that $\ln \det \mathbf Q_{l_t}$ and thereby, $\mathcal I$ increases with the number of active antennas. 
It is straight forward to verify that the solution to the maximization problem $\max \mathbf \det \mathbf Q_{l_t}$ subject to ${\text{tr}\left(\mathbf Q_{l_t}\right)=P}$ over diagonal matrices is $\mathbf Q_{l_t}=\frac{P}{l_t} \mathbf I_{l_t}$. Therfore, \cref{mimo_hsnr_mutual} is simplified as follows
\begin{align}
\mathcal I \approx l_t \ln \left( \frac{P}{l_t} \right) + \ln \det \mathbf W.
\end{align}
For $P>e l_t$,
\begin{align}
\frac{\partial \mathcal I}{\partial l_t} = \ln \left( \frac{P}{l_t} \right) - 1 > 0.
\end{align}
As a result, in high SNR regime, the instantaneous mutual information $\mathcal I$ strict monotonic increasing with respect to the number of transmit antennas.

{\bf Part ii):}

In this case, we approximate the instantaneous mutual information as follows.
\begin{align} \label{mimo_hsnr_mutual_2}
\mathcal I &= \ln \det \left ( \mathbf I_{n_r}+ \mathbf H \mathbf Q \mathbf H^\dag \right) \nonumber \\
&= \ln \prod_{\ell = 1}^{n_r} \left(1+ \text{eig}_{\ell} \left( \mathbf H \mathbf Q \mathbf H^{\dag} \right) \right) \nonumber \\
&\approx \ln \prod_{\ell = 1}^{n_r} \left( \text{eig}_{\ell} \left(\mathbf H \mathbf Q_{l_t} \mathbf H^{\dag} \right) \right)  \nonumber \\
&= \ln \det \left(\mathbf H \mathbf Q_{l_t} \mathbf H^{\dag}  \right). 
\end{align}

In this case, let us assume that the transmitter performs equal power allocation. Therefore,
\begin{align} \label{mimo_hsnr_mutual_3}
\mathcal I &\approx n_r \ln \left(\frac{P}{l_t}\right)+ \ln \det \left(\mathbf H \mathbf H^{\dag}  \right) \nonumber \\
&=  n_r \ln \left(\frac{P}{l_t}\right)+ \ln \det \mathbf W. 
\end{align}
In the following, we shall establish that the maximum throughput of the channel is strictly increasing with respect to $l_t$. 
From the maximization problem of \cref{throughput_gaussian_approx}, the maximum throughput can be equivalently expressed as
\begin{align} \label{throughput_gaussian_approx_hsnr}
\mathcal R_{s}^m = \max_z \mathcal Q(z) \left( \sigma(l_t,n_r) z + \mu(l_t,n_r) \right),
\end{align}
with
\begin{align} 
&\mu(l_t,n_r) = \mathbb E \left( \ln \det \mathbf W  \right)+ p \ln \left(\frac{P}{l_t}\right), \label{mu_gaussian_approx_hsnr} \\
&\sigma^2(l_t,n_r)= \text{Var} \left( \ln \det \mathbf W  \right). \label{sigma_gaussian_approx_hsnr}
\end{align}
A central complex Wishart-distributed matrix $\mathbf W$ satisfies \cite{verdu_book}
\begin{align}
&\mathbb E \left( \ln \det \mathbf W  \right) = \sum_{k=0}^{p-1} \digamma(n-k), \label{e_wishart_hsnr} \\
&\text{Var} \left( \ln \det \mathbf W  \right) = \sum_{k=0}^{p-1} \digamma^{\prime}(n-k). \label{var_wishart_hsnr}
\end{align}
For natural arguments, the E$\ddot{\text{u}}$ler's digamma function and its derivative, i.e., $\digamma(m)$ and $\digamma^{\prime}(m)$, can be expressed as
\begin{align}
&\digamma(m) = \digamma(1)+\sum_{\ell=1}^{m-1} \frac{1}{\ell}, \label{self_psi} \\
&\digamma^{\prime}(m) = \frac{\pi^2}{6} - \sum_{\ell=1}^{m-1} \frac{1}{\ell^2}, \label{dot_psi}
\end{align}
with $-\digamma(1)=-\Gamma^{\prime}(1)=\lim_{m\to \infty} \left( \sum_{\ell=1}^m \frac{1}{\ell} - \ln(m) \right) \approx0.577215$ the E$\ddot{\text{u}}$ler-Mascheroni constant.
Inserting \cref{dot_psi} into \cref{var_wishart_hsnr} and then into \cref{sigma_gaussian_approx_hsnr} to obtain
\begin{align} \label{sigma_hsnr_simplified}
\sigma^2(l_t,n_r) = \frac{\pi^2}{6} n_r - \sum_{k=0}^{n_r-1} \sum_{\ell=1}^{l_t-k-1} \frac{1}{\ell^2},
\end{align}
we see that $\sigma^2(l_t,n_r)$ is a monotonically decreasing function with respect to $l_t$. 
Whereas $\mu(l_t,n_r)$ is a strictly increasing function with respect to both $l_t$ and $n_r$ as it represents the ergodic capacity of the high SNR $l_t \times n_r$ MIMO channel.
On the other hand, $\sigma^2(l_t,n_r)=\sum_{k=0}^{p-1} \digamma^{\prime}(n-k)$ is a monotonically increasing function with respect to $n_r$, because of the Basel problem, i.e., $\lim_{m\to \infty} \sum_{\ell=1}^{m} \frac{1}{\ell^2}= \frac{\pi^2}{6}$, which verifies that $\digamma^{\prime}(m) \geq 0$.


As the $\mathcal Q$-function is upper-bounded by the Chernoff bound, i.e., $\mathcal Q(z) \leq \frac{1}{2} e^{-\frac{z^2}{2}},~z \geq 0$, we have for $z \geq 0$,
\begin{align} \label{large_M_z_positive_ineq_hsnr}
-\frac{1}{\sqrt{2\pi}} e^{-\frac{z^2}{2}} \left( \sigma(l_t,n_r) z + \mu(l_t,n_r) \right) + \sigma(l_t,n_r) \mathcal Q(z) \nonumber \\ \leq
-\frac{1}{\sqrt{2\pi}} e^{-\frac{z^2}{2}} \sigma(l_t,n_r) \left( z + \frac{\mu(l_t,n_r)}{\sigma(l_t,n_r)} - \sqrt{\frac{\pi}{2}} \right) \stackrel{(a)}{<} 0,
\end{align} 
where $(a)$ follows the fact that $z \geq 0$ and $\frac{\mu(l_t,n_r)}{\sigma(l_t,n_r)} - \sqrt{\frac{\pi}{2}}>0$ as $P$ and thereby $\mu(l_t,n_r)$ is large.
From \cref{large_M_zo_eq,large_M_z_positive_ineq_hsnr}, one immediately finds that $z^o<0$.
Recall from \cref{throughput_gaussian_approx_a}, the maximum throughput is a strictly increasing function with respect to $l_t$ because $\mathcal R_{s}^m$ is a strictly increasing function with respect to $\mu(l_t,n_r)$, a monotonically decreasing function with respect to $\sigma(l_t,n_r)$, and $z^o<0$.

Thus, in both parts, i.e., $l_t \leq n_r$ and $l_t \geq n_r$, $\mathcal R_{s}^m$ is a strictly increasing function with respect to $l_t$. We conclude that in the asymptotically high SNR regime MIMO channel, the maximum throughput is a strictly increasing function with respect to the number of active transmit antennas, and hence, $l_t^o=n_t$.

Performing Bartlett decomposition \cite{kshirsagar1959bartlett}, we get $\mathbf W=\mathbf A \mathbf A^\dag$, where $\mathbf A$ is a square lower triangular matrix (left triangular matrix) in the form of
\begin{align}
\mathbf A =
\begin{bmatrix}
a_{1,1} & 0 & 0 & \cdots & 0 \\
a_{2,1} & a_{2,2} & 0 & \cdots & 0 \\
a_{3,1} & a_{3,2} & a_{3,3} & \cdots & 0 \\
\vdots & \vdots & \vdots & \ddots & \vdots \\
a_{p,1} & a_{p,2} & a_{p,3} & \cdots & a_{p,p}
\end{bmatrix},
\end{align}
where $a_{\ell,k} \sim \mathcal{CN}(0,1), \ell \neq k$, and $a_{\ell,\ell}^2,\forall \ell$ are independent gamma-distributed with scale 1 and shape $n-\ell+1$.
Clearly, $\det \mathbf W=\det \mathbf A \times \det \mathbf A^\dag = \prod_{\ell=1}^{p} a_{\ell,\ell}^2$.

Therefore, the maximum throughput is 
\begin{align} \label{mimo_hsnr_thr_1}
\mathcal R_{s}^m &= \max_{s} \Pr \left\{ \det \left( \frac{P}{n_t} \mathbf W \right) \geq P s \right\} \ln \left( 1+P s \right) \nonumber \\
&=\max_{s} \Pr \left\{ \det \mathbf W \geq \frac{n_{t}^{p} s}{P^{p-1}} \right\} \ln \left( 1+P s \right) \nonumber \\
&= \max_{s} \Pr \left\{ \prod_{\ell=1}^{p} a_{\ell,\ell}^2 \geq \frac{n_{t}^{p} s}{P^{p-1}} \right\} \ln \left( 1+P s \right).
\end{align}


From \cref{throughput_gaussian_approx_hsnr,mu_gaussian_approx_hsnr,sigma_gaussian_approx_hsnr,e_wishart_hsnr,var_wishart_hsnr,self_psi,dot_psi,sigma_hsnr_simplified}, the throughput can also be written as 
\begin{align}
\mathcal R_{s}^m &= \max_z \mathcal Q(z) \left( \sigma\left( n_t,n_r \right) z + \mu\left( n_t,n_r \right) \right) \nonumber \\
&=\max_z \mathcal Q(z) \Bigg(z \sqrt{\sum_{\ell=0}^{p-1} \digamma^{\prime}(n-\ell)} \nonumber \\
&+ p \ln \left( \frac{P}{n_t} \right) + \sum_{\ell=0}^{p-1} \digamma(n-\ell) \Bigg) \nonumber \\
&= \max_z \mathcal Q(z)\! \Bigg(\! z \sqrt{\frac{\pi^2}{6} p - \sum_{k=0}^{p-1} \sum_{\ell=1}^{n-k-1} \frac{1}{\ell^2}}  \nonumber \\
&+ p\left( \digamma(1) +\ln \left( \frac{P}{n_t} \right) \right)+ \sum_{k=0}^{p-1} \sum_{\ell=1}^{n-k-1} \frac{1}{\ell} \!\Bigg).
\end{align}

\end{proof}

\begin{remark}
Since in asymptotically high SNR regime, the outage probability is Schur-convex with respect to the channel covariance matrix \cite{jorswieck}, the maximum throughput is a Schur-concave function of the channel covariance matrix, i.e., channel correlation decreases the maximum throughput.
\end{remark}


\subsection{Asymptotically Large Number of Antennas}

Here, two asymptotic results for large number of transmit antennas and large number of receive antennas are presented.
As pointed out earlier, we can restrict our attention to diagonal transmit covariance matrices.
To prove by contradiction, first we assume that the optimum transmit covariance matrix is $\mathbf Q^o = \frac{P}{l_t} \mathbf I_{l_t}$; next, we shall show that the maximum throughput increases with the number of transmit antennas and hence,  
$\mathbf Q^o = \frac{P}{n_t} \mathbf I_{n_t}$. Finally, we formulate the maximum throughput.

In following, \cref{theorem-high-transmit-antenna,theorem-high-receive-antenna} yield the maximum throughput of asymptotically large number of transmit antennas and asymptotically large number of receive antennas, respectively.
In the proof of both theorems, we use the results presented by Hochwald, Marzetta, and Tarokh \cite{hochwald} which provide us with approximations for mean and variance of the instantaneous mutual information in the large number of transmit antennas and large number of receive antennas asymptotes. 

\begin{theorem} \label{theorem-high-transmit-antenna}
In the MIMO channel with asymptotically large number of transmit antennas, the optimum transmit covariance matrix which maximizes the throughput is $\mathbf Q^o = \frac{P}{n_t} \mathbf I_{n_t}$. The maximum throughput of the channel is given by
\begin{align}
\mathcal R_{s}^m = \max_z \mathcal Q(z) \left( \sqrt{\frac{n_r}{n_t}} \frac{P}{\sqrt{1+P^2}} z + n_r \ln \left( 1+P \right) \right).
\end{align}
\end{theorem}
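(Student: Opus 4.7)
The plan is to mirror the structure used in the asymptotically high SNR regime (Part~ii of \cref{mimo_hsnr_thr}) and then invoke the large-$n_t$ asymptotics of Hochwald, Marzetta, and Tarokh for the first two moments of $\mathcal I$. First, by \cite{visotsky} it suffices to restrict $\mathbf Q$ to diagonal matrices. Assume that $l_t$ out of $n_t$ transmit antennas are active and carry equal power $P/l_t$, so that the effective channel is an $l_t\times n_r$ MIMO with transmit covariance $\frac{P}{l_t}\mathbf I_{l_t}$. Using the Gaussian approximation
\begin{equation*}
\mathcal I \sim \mathcal N\!\left(\mu(l_t,n_r),\,\sigma^2(l_t,n_r)\right),
\end{equation*}
the throughput maximization can be expressed in the form of \cref{throughput_gaussian_approx}.

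Next, I would plug in the large-$l_t$ approximations provided by \cite{hochwald}, namely
\begin{equation*}
\mu(l_t,n_r) \;\approx\; n_r \ln(1+P), \qquad \sigma^2(l_t,n_r) \;\approx\; \frac{n_r}{l_t}\,\frac{P^2}{1+P^2},
\end{equation*}
so that $\mu$ is (to leading order) independent of $l_t$, while $\sigma$ is strictly decreasing in $l_t$. The next step is to show that the maximizer $z^o$ of \cref{throughput_gaussian_approx} is negative. This uses the Chernoff bound $\mathcal Q(z)\le \tfrac12 e^{-z^2/2}$ for $z\ge 0$, exactly as in \cref{large_M_z_positive_ineq_hsnr}: for $z\ge 0$,
\begin{equation*}
-\tfrac{1}{\sqrt{2\pi}}e^{-z^2/2}\bigl(\sigma z+\mu\bigr) + \sigma\,\mathcal Q(z) \;\le\; -\tfrac{1}{\sqrt{2\pi}}e^{-z^2/2}\sigma\!\left(z+\tfrac{\mu}{\sigma}-\sqrt{\tfrac{\pi}{2}}\right) < 0,
\end{equation*}
because $\mu/\sigma = \sqrt{l_t\,n_r}\,\ln(1+P)\sqrt{1+P^2}/P \to \infty$ with $l_t$. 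Hence the first-order condition \cref{large_M_zo_eq} forces $z^o<0$. Recalling from \cref{throughput_gaussian_approx_a} that $\mathcal R_s^m$ is strictly increasing in $\mu$ and strictly decreasing in $\sigma$ whenever $z^o<0$, we deduce that $\mathcal R_s^m$ is strictly increasing in $l_t$, and therefore $l_t^o=n_t$, which gives $\mathbf Q^o=\frac{P}{n_t}\mathbf I_{n_t}$.

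Finally, substituting $l_t=n_t$ into the expressions for $\mu$ and $\sigma$ and inserting into \cref{throughput_gaussian_approx} yields the stated closed form
\begin{equation*}
\mathcal R_s^m = \max_{z}\mathcal Q(z)\!\left(\sqrt{\tfrac{n_r}{n_t}}\,\tfrac{P}{\sqrt{1+P^2}}\,z + n_r\ln(1+P)\right).
\end{equation*}
The main obstacle I anticipate is the sign argument for $z^o$: the Chernoff-bound step works only because $\mu/\sigma$ grows without bound as $l_t\to\infty$, so care is needed to state this regime precisely (large $l_t$ with $P$ fixed, or at least bounded away from zero) before concluding monotonicity of $\mathcal R_s^m$ in $l_t$. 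Once that is in place, the rest is a direct application of \cite{hochwald} and \cref{throughput_gaussian_approx_a}.
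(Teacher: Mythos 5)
Your proposal is correct and follows essentially the same route as the paper: restrict to diagonal equal-power covariances with $l_t$ active antennas, invoke the large-$n_t$ moments from \cite{hochwald}, use the Chernoff bound to show $z^o<0$ via $\mu/\sigma=\sqrt{n_r l_t}\,\sqrt{1+1/P^2}\,\ln(1+P)\to\infty$, and conclude monotonicity of $\mathcal R_s^m$ in $l_t$ from \cref{throughput_gaussian_approx_a}. Your closing caveat about fixing $P$ away from zero is a fair point that the paper leaves implicit.
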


\begin{proof}
According to the results provided in \cite{hochwald}, we have
\begin{align} 
\begin{cases}
\lim_{n_t \to \infty} \mu \left(l_t,n_r \right) = n_r \ln \left( 1+P \right), \\
\lim_{n_t \to \infty} \sigma^2 \left(l_t,n_r \right) = \frac{n_r P^2}{l_t(1+P^2)}.
\end{cases} \label{large_M_stats} 
\end{align}
From \cref{large_M_stats} and noting the $\mathcal Q$-function's Chernoff bound, i.e., $\mathcal Q(z) \leq \frac{1}{2} e^{-\frac{z^2}{2}},~ z \geq 0$, we have for $z \geq 0$,
\begin{align} \label{large_M_z_positive_ineq}
-\frac{1}{\sqrt{2\pi}} e^{-\frac{z^2}{2}} \left( \sigma\left(l_t,n_r \right) z + \mu\left(l_t,n_r \right) \right) + \sigma\left(l_t,n_r \right) \mathcal Q(z) \nonumber \\ \leq 
-\frac{1}{\sqrt{2\pi}} e^{-\frac{z^2}{2}} \sigma\left(l_t,n_r \right) \left( z + \frac{\mu\left(l_t,n_r \right)}{\sigma\left(l_t,n_r \right)} - \sqrt{\frac{\pi}{2}} \right) \stackrel{(a)}{<} 0,
\end{align} 
where (a) comes from the fact that for $z \geq 0$,
\begin{align}
&z + \frac{\mu\left(l_t,n_r \right)}{\sigma\left(l_t,n_r \right)} - \sqrt{\frac{\pi}{2}} \geq \frac{\mu\left(l_t,n_r \right)}{\sigma\left(l_t,n_r \right)}- \sqrt{\frac{\pi}{2}} \nonumber \\
&= \sqrt{n_r l_t}\sqrt{1+\frac{1}{P^2}}\ln(1+P)- \sqrt{\frac{\pi}{2}} \stackrel{l_t \to \infty}{>} 0 .
\end{align}
Comparing \cref{large_M_zo_eq,large_M_z_positive_ineq}, we have $z^o < 0$. Since $\mu\left(l_t,n_r \right)$ does not depend on $l_t$, $\sigma\left(l_t,n_r \right)$ is a strictly decreasing functions with respect to $l_t$, and $z^o < 0$, one can conclude that $\mathcal R_{s}^m=\mathcal Q(z^o) \left( \sigma\left(l_t,n_r \right) z^o + \mu\left(l_t,n_r \right) \right)$ is a strictly increasing function with respect to $l_t$. Thus, $\mathbf Q^o = \frac{P}{n_t} \mathbf I_{n_t}$.

\end{proof}

\begin{theorem} \label{theorem-high-receive-antenna}
In the MIMO channel with asymptotically large number of receive antennas, the optimum transmit covariance matrix which maximizes the throughput is $\mathbf Q^o = \frac{P}{n_t} \mathbf I_{n_t}$. The maximum throughput of the channel is given by 
\begin{align}
\mathcal R_{s}^m = \max_z \mathcal Q(z) \left( \sqrt{\frac{n_t}{n_r}} z + n_t \ln \left( 1+\frac{n_r}{n_t}P \right) \right).
\end{align}
\end{theorem}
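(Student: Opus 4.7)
The plan is to mirror the proof of \Cref{theorem-high-transmit-antenna} as closely as possible. First, I would restrict attention to diagonal transmit covariance matrices (justified at the top of \Cref{mimo-section}) and assume that $l_t$ out of $n_t$ antennas are active with equal power allocation $P/l_t$, so that $\mathcal{I} \sim \mathcal{N}(\mu(l_t,n_r),\sigma^2(l_t,n_r))$. I would then invoke the asymptotic expressions from Hochwald, Marzetta, and Tarokh \cite{hochwald} for the large-$n_r$ regime, namely
\begin{align*}
\mu(l_t,n_r) \longrightarrow l_t \ln\!\left( 1 + \tfrac{n_r}{l_t} P \right), \qquad
\sigma^2(l_t,n_r) \longrightarrow \tfrac{l_t}{n_r}.
\end{align*}
With these in hand, the throughput reduces to the maximization in \cref{throughput_gaussian_approx} and the remaining task is to show $l_t^o = n_t$.

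Next I would pin down the sign of $z^o$ by the same Chernoff argument used in \cref{large_M_z_positive_ineq}: for $z \geq 0$ one has $-\phi(z)(\sigma z+\mu)+\sigma \mathcal{Q}(z) \leq -\phi(z)\sigma(z + \mu/\sigma - \sqrt{\pi/2})$, which is strictly negative whenever $\mu/\sigma > \sqrt{\pi/2}$. In the present regime $\mu/\sigma \sim \sqrt{l_t n_r}\,\ln(n_r P/l_t) \to \infty$, so the first-order condition \cref{large_M_zo_eq} forces $z^o < 0$, exactly as before.

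The main obstacle, and the only genuine departure from \Cref{theorem-high-transmit-antenna}, is the monotonicity step. There, $\mu$ was independent of $l_t$ and $\sigma$ was decreasing in $l_t$, so together with $z^o<0$ the throughput was trivially increasing. Here, both $\mu$ and $\sigma$ are \emph{increasing} in $l_t$, so the simple sign argument does not close. I would instead apply the envelope theorem to write
\begin{align*}
\frac{\partial \mathcal{R}_s^m}{\partial l_t} = \mathcal{Q}(z^o)\!\left( z^o\,\frac{\partial \sigma}{\partial l_t} + \frac{\partial \mu}{\partial l_t} \right),
\end{align*}
and then dominate the (possibly negative) $z^o\,\partial\sigma/\partial l_t$ term by the positive $\partial \mu/\partial l_t$ term. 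Direct computation gives $\partial \mu/\partial l_t = \ln(1+n_r P/l_t) - \tfrac{n_r P/l_t}{1+n_r P/l_t}$, which grows like $\ln n_r$, whereas $\partial\sigma/\partial l_t = 1/(2\sqrt{l_t n_r}) = O(n_r^{-1/2})$. A Mills-ratio estimate applied to \cref{large_M_zo_eq} shows $|z^o| = O(\sqrt{\ln n_r})$, so $|z^o\,\partial\sigma/\partial l_t| = O(\sqrt{\ln n_r}/\sqrt{n_r}) \to 0$, which is negligible compared to $\partial \mu/\partial l_t$. Hence the envelope derivative is strictly positive and $l_t^o = n_t$.

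Finally, substituting $l_t = n_t$ into \cref{throughput_gaussian_approx} with $\sigma = \sqrt{n_t/n_r}$ and $\mu = n_t \ln(1+(n_r/n_t)P)$ yields the stated closed-form maximum throughput.
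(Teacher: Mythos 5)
Your proposal is correct, but the monotonicity step --- the heart of the argument --- is handled by a genuinely different mechanism than the paper's. The paper never characterizes $z^o$ in this case at all: it lower-bounds the $l_t$-antenna throughput by evaluating the objective at the specific suboptimal point $z=-\sqrt{n_r}$, massages that lower bound into $(l_t-1)\ln\left(1+\frac{n_r}{l_t-1}P\right)$ as $n_r\to\infty$, and then observes that this quantity is exactly the ergodic capacity of the $(l_t-1)$-antenna configuration, which upper-bounds the $(l_t-1)$-antenna throughput by \Cref{Markov-throughput-ergodic-lemma}. This sandwich sidesteps precisely the difficulty you identify (that $\mu$ and $\sigma$ are both increasing in $l_t$, so the transmit-antenna sign argument fails). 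Your route instead attacks $z^o$ directly via the envelope theorem, and the order estimates check out: the stationarity condition \cref{large_M_zo_eq} gives $\mathcal Q(z^o)/\phi(z^o)=z^o+\mu/\sigma$ with $\mu/\sigma=\sqrt{l_t n_r}\,\ln\left(1+\frac{n_r}{l_t}P\right)$, so indeed $|z^o|=O(\sqrt{\ln n_r})$, and $|z^o|\,\partial\sigma/\partial l_t=O(\sqrt{\ln n_r/n_r})$ is dominated by $\partial\mu/\partial l_t\sim\ln n_r$. What your approach buys is transparency --- it makes explicit that the mean gain from adding an antenna overwhelms the variance penalty --- and it reuses the $z^o<0$ machinery from \Cref{theorem-high-transmit-antenna}. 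What the paper's approach buys is that it needs no information about $z^o$ whatsoever and no continuity-in-$l_t$/envelope argument, only the already-established ergodic-capacity upper bound. Both are equally (in)formal about uniformity in the asymptotics; since $l_t$ ranges over the finite set $\{1,\dots,n_t\}$, that is not a real gap in either version.
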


\begin{proof}
As the number of receive antennas goes to infinity, the mean and variance of the channel mutual information obey \cite{hochwald}
\begin{align} 
\begin{cases}
\lim_{n_r \to \infty} \mu\left(l_t,n_r \right) = l_t \ln \left( 1+\frac{n_r}{l_t}P \right), \\
\lim_{n_r \to \infty} \sigma^2\left(l_t,n_r \right) = \frac{l_t}{n_r}.
\end{cases} \label{large_N_stats}
\end{align}

From \cref{throughput_gaussian_approx,large_M_stats}, the maximum throughput is 
\begin{align} \label{large_N_proof_eq}
\mathcal R_{s}^m &= \max_z \mathcal Q(z) \left( \sqrt{\frac{l_t}{n_r}} z + l_t \ln \left( 1+\frac{n_r}{l_t}P \right) \right) \nonumber \\
&\stackrel{(a)}{\geq} \mathcal Q(-\sqrt{n_r}) \left( -\sqrt{l_t}  + l_t \ln \left( 1+\frac{n_r}{l_t}P \right) \right) \nonumber \\
&\stackrel{(b)}{>} \mathcal Q(-\sqrt{n_r}) \left( -\ln\left( 1+\frac{n_r P}{l_t-1} \right) \right. \nonumber \\ 
&\qquad \qquad \left. - l_t \ln \left( 1-\frac{1}{l_t} \right) + l_t \ln \left( 1+\frac{n_r}{l_t}P \right) \right) \nonumber \\
&\stackrel{(c)}{>} \mathcal Q(-\sqrt{n_r}) \left( (l_t-1) \ln \left( 1+\frac{n_r}{l_t-1}P \right) \right)  \nonumber \\
&\stackrel{(d)}{\geq} \left( 1-\frac{1}{2}e^{-\frac{n_r}{2}} \right) \left( (l_t-1) \ln \left( 1+\frac{n_r}{l_t-1}P \right) \right) \nonumber \\
&\stackrel{(e)}{\stackrel{n_r \to \infty}{\longrightarrow}} (l_t-1) \ln \left( 1+\frac{n_r}{l_t-1}P \right) \nonumber \\
&\stackrel{(f)}{\geq} \!\max_z \mathcal Q(z)\! \left(\! \sqrt{\frac{l_t\!-\!1}{n_r}} z \!+\! (l_t\!-\!1) \ln\! \left(\! 1\!+\!\frac{n_r}{l_t\!-\!1}P \right) \!\right),
\end{align}
where (a) follows from choosing $z=-\sqrt{n_r}$ instead of its optimum value, (b) follows form $\sqrt{l_t}+l_t\ln \left( \frac{l_t}{l_t-1} \right) < \ln \left( 1+\frac{n_r P}{l_t-1} \right)$ for large values of $n_r$, (c) follows from algebraic simplifications, (d) follows from the $\mathcal Q$-function's Chernoff bound, (e) follows from $\lim_{n_r \to \infty} e^{-\frac{n_r}{2}} \ln \left( 1+\frac{n_r}{l_t-1}P \right) = 0$, and (f) follows from the fact that the maximum throughput is always less than or equal to the ergodic capacity based on \cref{Markov-throughput-ergodic-lemma}.

\Cref{large_N_proof_eq} proves that $\mathcal R_{s}^m$ is a strictly increasing function with respect to $l_t$, and hence, $\mathbf Q^o = \frac{P}{n_t} \mathbf I_{n_t}$.
\end{proof}



\section{Two-Transmitter Distributed Antenna Systems} \label{distributed-section}

There has been some research in assumption of perfect cooperation between base stations, and consequently treat them as distributed antennas of one base station \cite{goldsmith}. 
Here, we investigate a block Rayleigh fading system wherein two uninformed single-antenna transmitters want to transmit a common message to a single-antenna receiver.
Let $h_1$ and $h_2$ denote the fading coefficients of the first transmitter-receiver link and second transmitter-receiver link, respectively. 
We assume that $h_1$ and $h_2$ are independent i.i.d.\ complex Gaussian random variables, each with zero-mean and equal variance real and imaginary parts ($h_1, h_2 \sim \mathcal{CN}(0,1)$).
We also assume that $h_1$ and $h_2$ are constant during two consecutive transmission blocks. 

We propose a practical distributed algorithm that provides all instantaneous mutual information distributions which are achievable by treating the transmitters as antennas of one composed element. 
\Cref{distributed-theorem} proves that the outage probability in a MISO channel with two transmit antennas is also achievable in this channel. 


\begin{theorem}  \label{distributed-theorem}
The outage probability in a MISO channel with two transmit antennas and total power constraint $P$ is achievable in a distributed antenna system with two single-antenna transmitters and one single-antenna receiver, where the total power constraint at each transmitter is $\frac{P}{2}$. 
\end{theorem}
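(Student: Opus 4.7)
The plan is to exhibit an explicit distributed coding scheme whose receiver sees an instantaneous mutual information with the same distribution as the $2\times 1$ MISO channel equipped with the covariance $\mathbf{Q}^o = \frac{P}{2}\mathbf{I}_2$. By \cref{miso-single-theorem} (and by the outage-minimization result of \cite{jorswieck}), this diagonal, equal-power choice uses independent, equal-power signals on the two antennas, so it is precisely the MISO strategy that never requires cross-antenna coordination. Any outage probability attained in MISO by some covariance matrix is therefore either attained by $\mathbf{Q}^o$ itself (at relevant rates) or by a strategy that degenerates into using one or both antennas with equal, uncorrelated power, and the target is to reproduce each such instantaneous mutual information distribution in the distributed setting.

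The construction is as follows. At each transmitter $i \in \{1,2\}$, independently draw an i.i.d.\ complex Gaussian codebook with per-symbol variance $P/2$, indexed by the common message set $\mathcal{W}$. To convey a common message $W$, transmitter $i$ sends its own codeword $X_i^n(W)$, satisfying the per-transmitter power constraint $P/2$. The receiver, which knows $(h_1,h_2)$, observes
\begin{align}
Y^n = h_1 X_1^n(W) + h_2 X_2^n(W) + Z^n.
\end{align}
Conditioned on $(h_1,h_2)$, the effective per-symbol transmitted signal $h_1 X_1 + h_2 X_2$ is the sum of two independent zero-mean complex Gaussians, hence $\mathcal{CN}\!\left(0,(|h_1|^2+|h_2|^2)P/2\right)$. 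A standard joint-typicality (or maximum-likelihood) random-coding argument then yields reliable decoding at every rate strictly below $\ln\!\left(1+(|h_1|^2+|h_2|^2)P/2\right)$. This quantity matches the instantaneous mutual information of the $2\times 1$ MISO channel under $\mathbf{Q}^o=\frac{P}{2}\mathbf{I}_2$. Matching mutual-information distributions give matching outage probabilities at every target rate.

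The main conceptual obstacle is that the distributed system a priori lacks any ability to coordinate signal waveforms across transmitters, so it cannot mimic an arbitrary MISO covariance $\mathbf{Q}$; in particular, no joint beamforming is available. The resolution, however, is furnished exactly by \cref{miso-single-theorem}: the minimum-outage MISO strategy is already uncorrelated and equal-power, so the distributed restriction is not binding at the optimum. The second delicate point is the random-coding argument itself: because the two codebooks are drawn independently, one must verify that the ``effective codebook'' $\{h_1 X_1^n(w)+h_2 X_2^n(w)\}_{w\in\mathcal{W}}$ seen at the receiver still has the Gaussian i.i.d.\ statistics needed to invoke the scalar AWGN capacity result conditional on $(h_1,h_2)$; this follows immediately because sums of independent Gaussians are Gaussian with the summed variance.
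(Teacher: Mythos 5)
There is a genuine gap: your construction only reproduces the outage probability of the $2\times 1$ MISO channel under the single covariance $\mathbf Q=\frac{P}{2}\mathbf I_2$, whereas the theorem (as the paper states and later uses it) asserts that \emph{any} outage probability achievable in the MISO channel --- i.e.\ for an arbitrary covariance $\mathbf Q$ with $\mathrm{tr}(\mathbf Q)\leq P$ --- is achievable in the distributed system. Your resolution of this, ``the minimum-outage MISO strategy is already uncorrelated and equal-power, so the distributed restriction is not binding,'' is false in general: by the Jorswieck--Boche result quoted in the paper, equal power on all antennas minimizes outage only when $P>e^R-1$; for larger rates the outage-optimal $2\times 1$ MISO strategy puts the \emph{full} power $P$ on a single antenna (the paper's remark after the theorem makes this explicit: the optimal $\delta$ is either $\tfrac12$ or $1$). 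That strategy cannot be imitated in the distributed system by shutting off one transmitter, because each transmitter is individually limited to power $\tfrac{P}{2}$. \Cref{miso-single-theorem} only guarantees equal power is optimal at the throughput-maximizing rate, not at every rate $R$, and the theorem is a statement about outage at arbitrary $R$ (and is later invoked to claim that the entire family of achievable mutual-information distributions coincides).

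The paper closes exactly this gap with a two-slot Alamouti-like space-time code parameterized by $\rho\in[-1,1]$: transmitter 1 sends $X(t)$ then $-X^*(t+1)$, transmitter 2 sends $\rho X(t)+\sqrt{1-\rho^2}\,X(t+1)$ then $-\rho X^*(t+1)+\sqrt{1-\rho^2}\,X^*(t)$, each at power $\tfrac{P}{2}$ per slot. After matched filtering the effective SNR is $\bigl(|h_1|^2+|h_2|^2+2\rho\,\Re(h_1h_2^*)\bigr)\tfrac{P}{2}$, which (after diagonalizing $\mathbf Q$ and rotating by a fixed unitary) is shown to sweep out the outage probability of \emph{every} MISO covariance as $\rho$ ranges over $[-1,1]$; in particular $\rho=1$ yields $|h_1+h_2|^2\tfrac{P}{2}$, which has the same distribution as $|h_1|^2 P$ and thus recovers the single-antenna full-power case. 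Your random-coding argument with independent codebooks is a correct and slightly different (non-constructive) way to handle the $\rho=0$ point, but to complete the proof you must exhibit distributed schemes realizing the correlated cases under the per-transmitter power constraint, which is the substantive content of the paper's proof.
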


\begin{proof}
To prove the statement, first, a general expression for the outage probability in a $2 \times 1$ MISO channel is derived. Afterwards, we shall show that this expression is achievable in the two-transmitter distributed antenna system. 

In the $2 \times 1$ MISO channel, the outage probability for transmission rate $R$ is expressed as
\begin{align} \label{miso_single_throughput_not_max}
\mathcal P_{\text{out}} = \Pr \left\{ \ln \left(1+\vec h \mathbf{Q} \vec h^{\dag} \right) < R  \right\},
\end{align}
where $\mathbf{Q}$ is the transmit covariance matrix.
Since $\mathbf Q$ is non-negative definite, one can write it as $\mathbf Q = \mathbf U\mathbf D\mathbf U^\dag$, where $\mathbf D$ is diagonal and $\mathbf U$ is unitary. As $h_1$ and $h_2$ are independent complex Gaussian random variables, each with independent zero-mean and equal variance real and imaginary parts, the distribution of $\vec h\mathbf U$ is the same as that of $\vec h$ \cite{telatar}. Thus, \cref{miso_single_throughput_not_max} is simplified to
\begin{align} \label{miso_single_throughput_not_max_diagonalized}
   \mathcal P_{\text{out}} &= \Pr\left\{\ln \left( 1+\left(\vec h \mathbf U\right)\mathbf D\left(\vec h \mathbf U \right)^\dag \right) < R \right\} \nonumber \\
   &= \Pr\left\{\ln \left( 1+\vec h \mathbf D \vec h^\dag \right) < R \right\}.
\end{align} 
Since $\mathbf U_0 = \frac{1}{\sqrt{2}}\left[ \begin{matrix}
1 & 1 \\
1 & -1 \end{matrix} \right]$ is unitary, the distribution of $\vec h \mathbf U_0$ is the same as that of $\vec h$. Inserting into \cref{miso_single_throughput_not_max_diagonalized} yields
\begin{align} \label{miso_single_throughput_not_max_undiagonalized}
   \mathcal P_{\text{out}} 
   &= \Pr\left\{\ln \left( 1+\left( \vec h \mathbf U_0 \right)\mathbf D \left( \vec h \mathbf U_0 \right)^\dag \right) < R \right\} 
\nonumber \\
&= \Pr\left\{\ln \left( 1+\vec h \left( \mathbf U_0 \mathbf D  \mathbf U_0 \right)\vec h^\dag \right) < R \right\}.
\end{align}
Since $\text{tr}\left( \mathbf Q\right)=\text{tr}\left( \mathbf D\right)$, the total power constraint can be written as $\text{tr}\left( \mathbf D\right)\leq P$. Without loss of generality, let us define $\mathbf D \Def P \begin{bmatrix} \delta & 0 \\ 0 & \overline \delta \end{bmatrix}$, where $0 \leq \delta \leq 1$ and $\overline \delta=1- \delta$. Inserting into \cref{miso_single_throughput_not_max_undiagonalized} yields
\begin{align} \label{miso_single_throughput_not_max_undiagonalized2}
   \mathcal P_{\text{out}}  = \Pr\left\{\ln \left( 1+\vec h \frac{P}{2}\left[ \begin{matrix}
1 & 2\delta-1 \\
2\delta-1 & 1 \end{matrix} \right] \vec h^\dag \right) < R \right\}.
\end{align}
Defining $\rho \Def 2\delta-1$, we get
\begin{align} \label{miso_single_throughput_not_max_undiagonalized3}
   \mathcal P_{\text{out}} &{=} \Pr\left\{\ln \left( 1+\vec h \frac{P}{2}\left[ \begin{matrix}
1 & \rho \\
\rho & 1 \end{matrix} \right] \vec h^\dag \right) < R \right\} \nonumber \\
&{=} \Pr\left\{\ln\! \left(\! 1\!+\!  \left( \left|h_1\right|^2\!+\!\left|h_2\right|^2\!+\!2\rho\mathcal \Re(h_1 h_2^*) \right) \frac{P}{2} \right)\! < R \!\right\}.
\end{align}
Note that as $0 \leq \delta \leq 1$, we have $-1 \leq \rho \leq 1$.



We shall now show that the outage probability in \cref{miso_single_throughput_not_max_undiagonalized3} is achievable in the two-transmitter distributed antenna system with power constraint $\frac{P}{2}$ at each transmitter. 

The transmission strategy in two consecutive time slots is as follows. 
In time slot $t$, the first (resp. second) transmitter sends $X(t)$ (resp. $\rho X(t) + \sqrt{\left(1-\rho^2\right)} X(t+1)$). 
In time slot $t+1$, the first (resp. second) transmitter sends $-X^*(t+1)$ (resp. $-\rho  X^*(t+1) + \sqrt{\left(1-\rho^2\right)} X^*(t)$).
Assuming $\mathbb E\left( \left|X \right|^2\right)=\frac{P}{2}$, the power consumption per time slot in each transmitter is $\frac{P}{2}$.

The received signal at the receiver is
\begin{align} \label{alam_2trans_1}
&Y(t)=h_{1} X(t)+h_{2}\Big( \rho X(t)  \nonumber \\
&\qquad + \sqrt{\left(1-\rho^2\right)} X(t+1) \Big)+Z(t), \\
&Y(t\!+\!1)\!=\!-h_{1}\! X^*(t\!+\!1) \!+\!h_{2}\! \Big(\!\! -\rho X^*(t+1)  \nonumber \\
&\qquad \ \ \ \ + \sqrt{\left(1-\rho^2\right)} X^*(t) \Big)+Z(t+1). 
\end{align}
In matrix form,
\begin{align} \label{alam_2trans_2}
 \begin{bmatrix}
  Y(t) \\
-Y(t+1)^{*}
\end{bmatrix}
=
\mathbf G 
\begin{bmatrix}
  X(t) \\
X(t+1)
\end{bmatrix}
+
 \begin{bmatrix}
  Z(t) \\
-Z^{*}(t+1)
\end{bmatrix},
\end{align}
where
\begin{align}
\mathbf G \!\! \Def \!\!\! \begin{bmatrix}
 h_{1} \!+\! h_{2}\rho &  h_{2}\sqrt{\left(1-\rho^2\right)}  \\
 -h_{2}^{*}\sqrt{\left(1-\rho^2\right)} &  h_{1}^{*}\!+\!h_2^* \rho 
 \end{bmatrix}\!\!.
\end{align}



By multiplying $\mathbf G^\dag$ to the both sides of \cref{alam_2trans_2}, two parallel channels are separated as
\begin{align} \label{alamouti_decoder_matrix}
 \begin{bmatrix}
  \tilde{Y}(t) \\
\tilde{Y}(t+1)
 \end{bmatrix}
&= 
\mathbf G^\dag
\begin{bmatrix}
  Y(t) \\
-Y^{*}(t+1)
 \end{bmatrix} 
 \nonumber \\
& =\Big( \left|h_1 +h_2\rho  \right|^{2} \nonumber \\
&+|h_{2}|^{2}\left(1-\rho^2\right)\Big)
\mathbf I_2
\begin{bmatrix}
  X(t) \\
X(t+1)
 \end{bmatrix} 
 \nonumber \\
&+
\mathbf G^\dag
\begin{bmatrix}
  Z(t) \\
-Z^{*}(t+1)
 \end{bmatrix} \nonumber \\
 &=
  h \mathbf I_2
\begin{bmatrix}
  X(t) \\
X(t+1)
 \end{bmatrix} 
+
\begin{bmatrix}
  \tilde Z(t) \\
 \tilde Z(t+1)
 \end{bmatrix},
\end{align}
where $h \Def \left|h_1 +h_2\rho \right|^{2} +\left|h_{2}\right|^{2}\left(1-\rho^2\right)$, and $\tilde Z(t)$ and $\tilde Z(t+1)$ are independent zero mean complex Gaussian random variables with power equal to $\mathbb E \left(\left|\tilde Z\right|^2\right)=h$. Thus, the received signal power to noise ratio at the receiver is 
\begin{align} \label{2transmitter_general_rate}
\frac{h^2 \frac{P}{2}}{\mathbb E \left(\left|\tilde Z\right|^2\right)} &= \left(\left|h_1 +h_2\rho \right|^{2} +\left|h_{2}\right|^{2}\left(1-\rho^2\right)\right)\frac{P}{2}\nonumber \\
&= \left(\left|h_1\right|^2+\left|h_2\right|^2+2\rho\Re\left(h_1 h_2^*\right)\right) \frac{P}{2}.
\end{align}
Therefore, the outage probability in the proposed scheme is given by
\begin{align} \label{miso_single_throughput_not_max_undiagonalized4}
   \mathcal P_{\text{out}} {=} \Pr\left\{\ln\! \left( \!1\!+\!  \left( \left|h_1\right|^2\!+\!\left|h_2\right|^2\!+\!2\rho\mathcal \Re(h_1 h_2^*) \right) \frac{P}{2}\right)\! < \!R \right\}
.
\end{align}
\Cref{miso_single_throughput_not_max_undiagonalized3} together with \cref{miso_single_throughput_not_max_undiagonalized4} shows that the outage probability in a $2\times 1$ MISO channel is also achievable in the two-transmitter distributed antenna system.

\end{proof}

\begin{remark}
To achieve the minimum outage probability in \cref{distributed-theorem}, the optimum solution to $\delta$ is either $1$ or $\frac{1}{2}$, depending on $R$ and $P$. Equivalently, in the two-transmitter distributed antennas, the optimum value of $\rho$ is either $1$ or $0$.
\end{remark}

Note that for $\rho=0$, the proposed transmission scheme in the two-transmitter distributed antenna system is equivalent to the Alamouti code \cite{alamouti}. 

\begin{remark}
Since the outage probability is the CDF of the instantaneous mutual information, one concludes that any achievable instantaneous mutual information distribution in the $2\times 1$ MISO channel is also achievable in this two-transmitter distributed antenna system. 
\end{remark}

\begin{remark}
Based on \cref{distributed-theorem}, the maximum throughput in the two-transmitter distributed antenna system with total power constraint $\frac{P}{2}$ at each transmitter is the same as that of a $2\times 1$ MISO channel with total power constraint $P$. By substituting $n_t=2$ in \cref{miso-sl-rate-max}, the maximum throughput is given by
\begin{align}
\mathcal R_s^m = \max_{0< s< 1} (1+2s)e^{-2s} \ln \left( 1+Ps \right).
\end{align}
\end{remark}

\begin{remark}
In a similar approach, it can be shown that the maximum expected-rate as well as the ergodic capacity of this two-transmitter distributed antenna system and the $2\times 1$ MISO channel are the same.
\end{remark}



Based on \cref{distributed-theorem} and recall from \cref{miso-infinite-proposition} with $n_t = 2$, we come up with the following Corollary. 
\begin{corollary} \label{distributed-proposition}
The maximum continuous-layer expected-rate of the distributed antenna system with two transmitters each with total power $\frac{P}{2}$ is
\begin{align}
\mathcal R_{c}^m = 3\emph{E}_1(s_0) + (1-s_0)e^{-s_0} - 3\emph{E}_1(s_1) - (1-s_1)e^{-s_1},
\end{align}
where $s_1 = \frac{1+\sqrt{5}}{2}$, and $s_0 = \sqrt[3]{\sqrt{ A^2-B^3}+A}  
 +\frac{B}{\sqrt[3]{\sqrt{ A^2-B^3}+A}}-\frac{2}{3 P}$ with $A=\frac{1}{ P}-\frac{2}{3 P^2}-\frac{8}{27 P^3}$ and $B=\frac{2}{3 P}+\frac{4}{9 P^2}$.
 
From \cref{ergodic-capacity-miso-pro}, the ergodic capacity in this channel is 
\begin{align}
C_{\emph{erg}} = 1+\left(1- \frac{2}{P} \right) e^{\frac{2}{P}} \emph{E}_1\left( \frac{2}{P} \right)  .
\end{align}

\end{corollary}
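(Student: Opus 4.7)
The plan is to apply Proposition~\ref{distributed-theorem} to reduce the distributed system to a $2\times 1$ MISO channel with total power $P$, and then specialize Propositions~\ref{miso-infinite-proposition} and~\ref{ergodic-capacity-miso-pro} to $n_t = 2$. Since \cref{distributed-theorem} already establishes that both channels share the same instantaneous mutual information distribution, every formula derived for the MISO case transfers verbatim; only the algebraic specialization to $n_t = 2$ remains.

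First, I would substitute $n_t = 2$ into the antiderivative \cref{indefinite-integral-miso-cl}. The only surviving term in the finite sum is $\ell = 1$, giving $\frac{1}{1!}\bigl(s - 2 \cdot 0! \cdot 1\bigr) = s - 2$; combined with the remaining $e^{-s}$ and $-(n_t+1)\mathrm{E}_1(s) = -3\mathrm{E}_1(s)$ terms, one obtains $\mathcal R(s) = (s-1)e^{-s} - 3\mathrm{E}_1(s) = -(1-s)e^{-s} - 3\mathrm{E}_1(s)$. Evaluating $\mathcal R(s_1) - \mathcal R(s_0)$ immediately yields the claimed closed form.

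Next, I would determine $s_0$ and $s_1$ from the boundary conditions \cref{miso-integral-limits-cl} with $n_t = 2$. The upper boundary equation reduces to $s_1^{-2} + s_1^{-1} = 1$, i.e.\ $s_1^2 - s_1 - 1 = 0$, whose positive root is the golden ratio $s_1 = \frac{1+\sqrt{5}}{2}$. The lower boundary condition reduces to the depressed cubic $Ps_0^3 + 2s_0^2 - 2s_0 - 2 = 0$. I would eliminate the quadratic term via the standard substitution $s_0 = u - \frac{2}{3P}$, obtaining a depressed cubic $u^3 - 3Bu - 2A = 0$ with $A = \frac{1}{P} - \frac{2}{3P^2} - \frac{8}{27P^3}$ and $B = \frac{2}{3P} + \frac{4}{9P^2}$; Cardano's formula then produces $u = \sqrt[3]{A + \sqrt{A^2 - B^3}} + B/\sqrt[3]{A + \sqrt{A^2 - B^3}}$, which after back-substitution gives the stated expression for $s_0$. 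The main technical step here is the Cardano calculation, but it is routine once the right shift $s_0 = u - \frac{2}{3P}$ is identified; verifying that the resulting real root lies in the physically meaningful range (where $\mathcal R_c^m \geq 0$) is a side check but follows from $s_0 < s_1$ guaranteed by \cref{miso-infinite-proposition}.

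Finally, for the ergodic capacity I would substitute $n_t = 2$ directly into \cref{ergodic-capacity-miso-formula}. The first double sum collapses to $e^{2/P}\mathrm{E}_1(2/P)\bigl(1 - \tfrac{2}{P}\bigr)$, while the remaining triple sum has only one nonzero term (from $\ell = 1$, $k = 0$, $m = 0$) contributing $1$. Adding the two yields $C_{\mathrm{erg}} = 1 + \bigl(1 - \tfrac{2}{P}\bigr)e^{2/P}\mathrm{E}_1(2/P)$, which matches the stated expression. No obstacle is expected beyond careful index bookkeeping in the sums.
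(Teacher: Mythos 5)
Your proposal is correct and follows exactly the route the paper intends: invoke \cref{distributed-theorem} to transfer the MISO results, then specialize \cref{miso-infinite-proposition} (antiderivative, boundary equations, Cardano for the cubic) and \cref{ergodic-capacity-miso-pro} to $n_t=2$. All of your algebraic specializations check out against the stated formulas, so nothing further is needed.
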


The maximum throughput, the maximum two-layer expected-rate, the maximum continuous-layer expected-rate, and the ergodic capacity in the two-transmitter distributed antenna system are depicted in \cref{miso-two-antenna-fig}.

\begin{figure}
\centering
\includegraphics[scale=0.7]{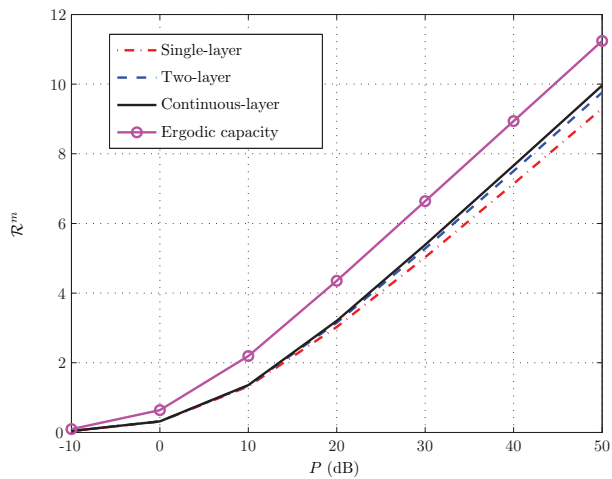}
\caption{The maximum throughput, the maximum two-layer expected-rate, the maximum continuous-layer expected-rate, and the ergodic capacity (all in \emph{nats}) in the two-transmitter distributed antenna system.}
\label{miso-two-antenna-fig}
\end{figure}


\section{Conclusion} \label{conclusion-section}

The throughput and expected-rate maximization of multiple-antenna channels are addressed in block Rayleigh fading environments, in which the transmitter does not access the CSI. 
It is established that, in order to achieve the maximum throughput, one has to transmit uncorrelated circularly symmetric zero mean equal power Gaussian signals from all the transmit antennas. This indeed yields the same transmit covariance matrix that achieves the ergodic capacity.

In point-to-point uncorrelated MISO channels, in contrast to using a fraction of antennas which is optimum for outage capacity, the throughput is maximized by sending uncorrelated equal power signals on all transmit antennas. 
The maximum expected-rate is analyzed using multi-layer codes. 
It is proved that in each layer, sending uncorrelated signals with equal powers from all available antennas is optimum. The continuous-layer expected-rate of the channel is then derived in closed form.

The optimum transmit strategy maximizing the throughput is obtained for point-to-point uncorrelated MIMO channels. Since the PDF of the MIMO instantaneous mutual information is not tractable, four asymptotic cases are considered: low SNR regime, high SNR regime, large number of transmit antennas, and large number of receive antennas.
In each case, the maximum throughput of the MIMO channel is derived.

Finally, a distributed antenna system with two single-antenna transmitters and one single-antenna receiver is investigated. It is proved that any achievable instantaneous mutual information distribution in the $2\times 1$ MISO channel is also achievable in the two-transmitter distributed antenna system. 
Hence, both systems achieve the same maximum throughput and expected-rate.

\begin{appendices} 

\section{Proof of \Cref{ergodic-capacity-miso-pro}} \label{appendix-ergodic-capacity-miso-pro}

The ergodic capacity of a $1 \times n_r $ SIMO channel is given by
\begin{align} \label{ergodic-capacity-general-eq}
C_{\text{erg}} = \int_{0}^{\infty} \frac{x^{n_r-1}e^{-x}}{\left( n_r-1 \right)!} \ln \left( 1+P x \right) \ud x.
\end{align}
Applying the integration by parts rule on \cref{ergodic-capacity-general-eq} leads to
\begin{align} \label{ergodic-capacity-general-eq2}
C_{\text{erg}} &= \left[-e^{-x} \sum_{\ell=0}^{n_r-1} \frac{x^\ell}{\ell!} \ln\left( 1+P x \right) \right]_{0}^{\infty} \nonumber \\
&+ \int_{0}^{\infty} e^{-x} \sum_{\ell=0}^{n_r-1} \frac{x^\ell}{\ell!} \frac{P}{1+Px} \ud x.
\end{align}
One can simply show that the first part on the right-hand-side in \cref{ergodic-capacity-general-eq2} is zero by repeatedly applying l'H$\hat{\text{o}}$pital's rule.
With $t = 1+P x$, \cref{ergodic-capacity-general-eq2} yields
\begin{align} \label{ergodic-capacity-general-eq3}
C_{\text{erg}} &= \int_{1}^{\infty} e^{-\frac{t-1}{P}} \sum_{\ell=0}^{n_r-1} \frac{1}{t \ell!} \left( \frac{t-1}{P} \right)^\ell \ud t.
\end{align}
From $\left( t-1 \right)^\ell=\sum_{\imath=0}^\ell \binom{\ell}{\imath} t^\imath \left(-1\right)^{\ell-\imath}$, where $\binom{\ell}{\imath}$ is the binomial coefficient, we get
\begin{align} \label{ergodic-capacity-general-eq4}
C_{\text{erg}}&=e^{\frac{1}{P}} \int_{1}^{\infty} e^{-\frac{t}{P}} \sum_{\ell=0}^{n_r-1} \frac{1}{P^\ell \ell! t} \sum_{\imath=0}^{\ell} \binom{\ell}{\imath} t^\imath \left(-1\right)^{\ell-\imath} \ud t \nonumber \\
&= e^{\frac{1}{P}} \sum_{\ell=0}^{n_r-1} \frac{\left(-1\right)^{\ell}}{P^\ell \ell!} \int_{1}^{\infty} \frac{e^{-\frac{t}{P}}}{t} \ud t \nonumber \\
&+ e^{\frac{1}{P}} \sum_{\ell=1}^{n_r-1} \frac{1}{P^\ell \ell!} \sum_{\imath=1}^{\ell} \left(-1\right)^{\ell-\imath} \binom{\ell}{\imath} \int_{1}^{\infty} e^{-\frac{t}{P}} t^{\imath-1} \ud t .
\end{align}
With $u=\frac{t}{P}$, we have 
\begin{align} \label{ergodic-capacity-general-eq5}
\int_{1}^{\infty} e^{-\frac{t}{P}} t^{\imath-1} \ud t  &= P^\imath \int_{\frac{1}{P}}^{\infty} e^{-u} u^{\imath-1} \ud u \nonumber \\
&= \left(\imath-1\right)! P^\imath e^{-\frac{1}{P}} \sum_{m=0}^{\imath-1} \frac{1}{m!} \left( \frac{1}{P} \right)^m.
\end{align}
Inserting \cref{ergodic-capacity-general-eq5} into \cref{ergodic-capacity-general-eq4}, we obtain
\begin{align} \label{ergodic-capacity-general-eq6}
C_{\text{erg}}&= e^{\frac{1}{P}} \text{E}_1\left({\frac{1}{P}}\right) \sum_{\ell=0}^{n_r-1} \frac{\left(-1\right)^{\ell}}{P^\ell \ell!} \nonumber \\
&+ \sum_{\ell=1}^{n_r-1} \frac{1}{P^\ell} \sum_{\imath=1}^{\ell} \frac{\left(-1\right)^{\ell-\imath}}{\imath\left(\ell-\imath\right)!}P^\imath \sum_{m=0}^{\imath-1} \frac{1}{m!} \frac{1}{P^m}.
\end{align}
Let $k=\ell-\imath$, the above leads to
\begin{align} \label{ergodic-capacity-general-eq7}
C_{\text{erg}}&= e^{\frac{1}{P}} \text{E}_1\left({\frac{1}{P}}\right) \sum_{\ell=0}^{n_r-1} \frac{\left(-1\right)^{\ell}}{P^\ell \ell!} \nonumber \\
&+ \sum_{\ell=1}^{n_r-1} \sum_{k=0}^{\ell-1} \frac{\left(-1\right)^{k}}{\left(\ell-k\right)k!} \sum_{m=0}^{\ell-k-1} \frac{1}{m! P^{k+m}}.
\end{align}

From \cite{telatar}, the ergodic capacity in an $n_t \times 1$ MISO channel with total power constraint $P$ equals the ergodic capacity in a $1 \times n_t$ SIMO channel with total power constraint $\frac{P}{n_t}$. Hence, we obtain \cref{ergodic-capacity-miso-formula} by replacing $P$ with $\frac{P}{n_t}$ and $n_r$ with $n_t$ in \cref{ergodic-capacity-general-eq7}.

\section{} \label{proof-expected-rate-integral}

The indefinite integral (antiderivative) of \cref{infinite-layer-miso-integral} can be written as
\begin{align} \label{infinite-layer-miso-integral-appendix}
\mathcal R(s) &= \int e^{-s} \left( \frac{n_t+1}{s} -1  \right) \sum_{\ell =0}^{n_t-1} \frac{s^\ell}{\ell !} \ud s \nonumber \\
&=  \left(n_t+1\right) \int \frac{e^{-s}}{s} \ud s
+ \left(n_t+1\right) \int e^{-s} \sum_{\ell =0}^{n_t-1} \frac{s^{\ell-1}}{\ell !} \ud s \nonumber \\
&- \int e^{-s} \sum_{\ell =0}^{n_t-1} \frac{s^{\ell}}{\ell !} \ud s \nonumber \\
&= \left(n_t+1\right)\! \int \frac{e^{-s}}{s} \ud s 
+\! \sum_{\ell =0}^{n_t-1} \frac{1}{\ell!} \Bigg(\!\left(n_t+1\right) \! \int \! s^{\ell-1} e^{-s} \ud s \nonumber \\
&- \int s^{\ell} e^{-s}  \ud s \Bigg).
\end{align}
The definite integral of $\mathcal R(s)$ over the interval $[s_0 ~~ \infty]$ is given by
\begin{align} \label{infinite-layer-miso-integral-appendix2}
\left[\mathcal R(s)\right]_{s_0}^{\infty} &= \left(n_t+1\right) \int_{s_0}^{\infty} \frac{e^{-s}}{s} \ud s
+ \sum_{\ell =0}^{n_t-1} \frac{1}{\ell!} \Bigg( \nonumber \\
&\left(n_t+1\right) \int_{s_0}^{\infty} s^{\ell-1} e^{-s} \ud s 
- \int_{s_0}^{\infty} s^{\ell} e^{-s}  \ud s \Bigg) \nonumber \\
&= \left(n_t+1\right) \text{E}_1 \left(s_0\right) + \sum_{\ell =0}^{n_t-1} \frac{1}{\ell!} \Bigg( \nonumber \\
& \left(n_t+1\right) \left(\ell-1\right)! e^{-s_0} \sum_{k=0}^{\ell-1} \frac{s_0^k}{k!}
- \ell! e^{-s_0} \sum_{k=0}^{\ell} \frac{s_0^k}{k!} \Bigg) \nonumber \\
&= \left(n_t+1\right) \text{E}_1 \left(s_0\right) -e^{-s_0} +e^{-s_0} \sum_{\ell =1}^{n_t-1} \frac{1}{\ell!} \Bigg( \nonumber \\
&- s_0^\ell + \left(n_t+1-\ell\right) \left(\ell-1\right)! \sum_{k=0}^{\ell-1} \frac{s_0^k}{k!} \Bigg).
\end{align}
The definite integral of $\mathcal R(s)$ over the interval $[s_0 ~~ s_1]$ can be written as $\left[\mathcal R(s)\right]_{s_0}^{\infty}-\left[\mathcal R(s)\right]_{s_1}^{\infty}$. Therefore, defining
\begin{align} \label{infinite-layer-miso-integral-appendix3}
\mathcal R(s) &\Def -\left(n_t+1\right) \text{E}_1 \left(s\right) +e^{-s} \nonumber \\
&+e^{-s} \sum_{\ell =1}^{n_t-1} \frac{1}{\ell!} \Bigg( s^\ell - \left(n_t+1-\ell\right) \left(\ell-1\right)! \sum_{k=0}^{\ell-1} \frac{s^k}{k!} \Bigg), 
\end{align}
and inserting into \cref{infinite-layer-miso-integral-appendix2} leads to the conclusion that
\begin{align}
\left[\mathcal R(s)\right]_{s_0}^{s_1} = \mathcal R(s_1) - \mathcal R(s_0).
\end{align}

\end{appendices}

\bibliographystyle{IEEEtran}
\bibliography{bibliography}
\end{document}